\newtheorem{theorem}{Theorem}
\newtheorem{lemma}[theorem]{Lemma}
\newtheorem{proposition}[theorem]{Proposition}
\newtheorem{observation}[theorem]{Observation}
\theoremstyle{definition}
\newtheorem{definition}[theorem]{Definition}
\newtheorem{example}[theorem]{Example}
\newtheorem{remark}{Remark}
\title{An Approximation Algorithm for \\Maximum Stable Matching with Ties and Constraints} 
\author{Yu Yokoi\thanks{Principles of Informatics Research Division, National Institute of Informatics, Hitotsubashi, Chiyoda-ku, Tokyo 101-8430, Japan, E-mail: {\tt yokoi@nii.ac.jp}. }}
\date{\today}
\newcommand{\R}{\mathbf{R}}
\newcommand{\Zp}{\mathbf{Z}_{+}}
\newcommand{\w}{\partial_{W}}
\newcommand{\f}{\partial_{F}}
\newcommand{\I}{\mathcal{I}}
\newcommand{\B}{\mathcal{B}}
\newcommand{\laminar}{\mathcal{L}}
\newcommand{\M}{\mathcal{M}}
\newcommand{\C}{\mathcal{C}}
\begin{document}
\maketitle
\vspace{-4mm}
\begin{abstract}
We present a polynomial-time $\frac{3}{2}$-approximation algorithm for 
the problem of finding a maximum-cardinality stable matching in a many-to-many matching model with 
ties and laminar constraints on both sides. 
We formulate our problem using a bipartite multigraph 
whose vertices are called workers and firms, and edges are called contracts. 
Our algorithm is described as the computation of a stable matching in an auxiliary instance, 
in which each contract is replaced with three of its copies and all agents have strict preferences on the copied contracts.
The construction of this auxiliary instance is symmetric for the two sides,
which facilitates a simple symmetric analysis.
We use the notion of matroid-kernel for computation in the auxiliary instance and
exploit the base-orderability of laminar matroids to show the approximation ratio.

In a special case in which each worker is assigned at most one contract 
and each firm has a strict preference, our algorithm defines a $\frac{3}{2}$-approximation mechanism 
that is strategy-proof for workers.
\end{abstract}

\section{Introduction}
The {\em college admission problem} ({\large\sc ca}) is a many-to-one generalization of the well-known 
{\em stable marriage problem} \cite{GIbook, Knuthbook, Manlovebook}, introduced by Gale and Shapley \cite{GS62}.
An instance of {\large\sc ca} involves two disjoint agent sets called students and colleges.
Each agent has a strict linear order of preference over agents on the opposite side, 
and each college has an upper quota for the number of assigned students. 
It is known that any instance of {\large\sc ca} has a stable matching, we can find it efficiently, and all stable matchings have the same cardinality.

Recently, matching problems with constraints have been studied extensively \cite{BDKW14,EHYY14,FT17,KK15,KK17}. 
Motivated by the matching scheme used in the higher education sector in Hungary, Bir{\'o} et al.~\cite{BFIM10} studied {\em {\large\sc ca} with common quotas}. 
In this problem, in addition to individual colleges, certain subsets of colleges, called {\em bounded sets}, have upper quotas. 
Such constraints are also called {\em regional caps} or {\em distributional constraints},
 and they have been studied in \cite{Goto16, KK18}. 
Meanwhile, motivated by academic hiring, Huang \cite{CCH10} introduced the {\em classified stable matching problem}. 
This is an extension of {\large\sc ca} in which each individual college has quotas for subsets of students, called {\em classes}. 
Its many-to-many generalizations have been studied in \cite{FK16, Yokoi17}.%
\footnote{In \cite{FK16, Goto16, CCH10, Yokoi17}, not only upper quotas but also lower quotas are considered. With lower quotas, the existence of stable matching is not guaranteed. In this paper, we consider only upper quotas.}
For these models, the laminar structure of constraints is commonly found to be the key to 
the existence of a stable matching. 
A family $\laminar$ of sets is called {\em laminar} if any $L,L'\in \laminar$ satisfy $L\subseteq L'$ or $L\supseteq L'$ or $L\cap L'=\emptyset$ (also called  {\em nested} or {\em hierarchical}).
In \cite{BFIM10,CCH10}, the authors showed that a stable matching exists in their models if regions or classes form laminar families, 
whereas the existence is not guaranteed in the general case. Furthermore, in the laminar case, 
a stable matching can be found efficiently, and all 
stable matchings have the same cardinality. Applications with laminar constraints have been discussed in \cite{KK18}.

\smallskip
The purpose of this paper is to introduce ties to a matching model with laminar constraints.
In the previous studies described above, the preferences of agents were assumed to be strictly ordered. 
However, ties naturally arise in real problems. Matching models with ties have been studied widely in the literature \cite{Irving94, GIbook, Manlovebook},
where the preference of an agent is said to contain a {\em tie} if she is indifferent between two or more agents on the opposite side. 
When ties are allowed, the existence of a stable matching is maintained; however, stable matchings vary in cardinalities. 
As it is desirable to produce a large matching in practical applications, we consider the problem of finding a maximum-cardinality stable matching.

Such a problem is known to be difficult even in the simple matching model without constraints.
The problem of finding a maximum stable matching in the setting of {\em stable marriage with ties and incomplete lists}, called {\sc max-smti},  
is NP-hard \cite{IMMM99,MIIMM02}, as is obtaining an approximation ratio 
within $\frac{33}{29}$ \cite{Yanagisawa07}. 
For its approximability, several algorithms with improved approximation 
ratios have been proposed \cite{IMY07,IMY08,Kiraly11,Kiraly13, Mcdermid09,Paluch14}. 
The current best ratio is $\frac{3}{2}$ by a polynomial-time algorithm proposed by McDermid \cite{Mcdermid09} 
as well as linear-time algorithms proposed by Paluch \cite{Paluch14} and Kir\'aly \cite{Kiraly13}. 
The $\frac{3}{2}$-approximability extends to the settings of {\large\sc ca} with ties \cite{Kiraly13}
and the student-project allocation problem with ties \cite{CM18}.

\bigskip
\noindent\textbf{\textsf{Our Contribution.}}~
We present a polynomial-time $\frac{3}{2}$-approximation algorithm for 
the problem of finding a maximum-cardinality stable matching in a many-to-many matching model with 
ties and laminar constraints on both sides. 
We call this problem {\sc max-smti-lc} and formulate it using a bipartite multigraph, 
where we call the two vertex sets {\em workers} and {\em firms}, respectively, 
and each edge a {\em contract}. Each agent has upper quotas on a laminar family defined on incident contracts.
Our formulation can deal with each agent's constraints, such as {\em classified stable matching}.
Furthermore, distributional constraints such as {\em {\large\sc ca} with common quotas} can be handled by considering a 
dummy agent that represents a consortium of the agents on one side (see Remark~\ref{rem:example} at the end of Section~\ref{sec:formulation}).
Our algorithm runs in $O(k\cdot|E|^2)$ time, where $E$ is the set of contracts and $k$ is the maximum level of nesting of laminar constraints. 
The {\em level of nesting} of a laminar family $\laminar$ is the maximum length of a chain 
$L_1\subsetneq L_2\subsetneq \cdots\subsetneq L_k$ of members of $\laminar$;
hence, $k\leq|E|$.

Our algorithm is described as the computation of a stable matching in an auxiliary instance.
Here, we explain the ideas underlying the construction of the auxiliary instance,
which is inspired by the algorithms of Kir\'aly~\cite{Kiraly13} and Hamada, Miyazaki, and Yanagisawa~\cite{HMY19}.

First, we briefly explain Kir\'aly's $\frac{3}{2}$-approximation algorithm for {\sc max-smti} \cite{Kiraly13}.
In this algorithm, each worker makes proposals from top to bottom in her list sequentially, 
as with the worker-oriented Gale--Shapley algorithm.
A worker rejected by all firms is given a second chance for proposals.
Each firm prioritizes a worker in the second cycle over a worker in the first cycle if 
they are tied in its preference list. 
This idea of {\em promotion} is used to handle ties in firms' preference lists.
To handle ties in workers' lists, Kir\'aly's algorithm lets
each worker prioritize a currently unmatched firm over a currently matched firm if they are tied in her preference list. 
This priority rule depends on the states of firms at each moment,
which makes the algorithm complicated when we introduce constraints on both sides.

Then, we introduce the idea of the algorithm of Hamada et al. \cite{HMY19}, 
who proposed a worker-strategy-proof algorithm for {\sc max-smti} that 
attains the $\frac{3}{2}$-approximation ratio
when ties appear only in workers' lists.
They modified Kir\'aly's algorithm such that 
each worker's proposal order is predetermined and is not affected by the history of the algorithm.
Their algorithm can be seen as a Gale--Shapley-type algorithm in which each worker makes proposals twice to each firm in a tie before proceeding to the next tie, 
and each firm prioritizes second proposals over first proposals regardless of its preference.
By combining their algorithm with the promotion operation of Kir\'aly's algorithm,
we obtain a Gale--Shapley-type algorithm in which each worker makes at most three proposals to each firm.


Based on these observations, we propose
a method for transforming a {\sc max-smti-lc} instance $I$ into an auxiliary instance $I^*$, which is also a {\sc max-smti-lc} instance.
Each contract $e_i$ in $I$ is replaced with three copies $x_i,y_i,z_i$ in $I^*$.
Each agent has a strict preference on the copied contracts, which reflects the priority rules in the algorithms of Kir\'aly and  Hamada et al.
The instance $I^*$ has an upper bound $1$ for each triple $\{x_i,y_i,z_i\}$ and 
also has constraints corresponding to those in $I$. 
The construction of $I^*$ is completely symmetric for workers and firms.
We show that, for any stable matching $M^*$ of  $I^*$,
its projection $M\coloneqq \set{e_i| \{x_i,y_i,z_i\}\cap M^*\neq \emptyset}$ is 
a $\frac{3}{2}$-approximate solution for $I$.
Both the stability and the approximation ratio of $M$ are implied by the stability of $M^*$ in $I^*$,
and the process of computing $M^*$ is irrelevant.
Thus, our method enables us to conduct a symmetric and static analysis even with constraints.

Because the auxiliary instance $I^*$ has no ties, we can find a stable matching of $I^*$ efficiently 
by using the matroid framework of Fleiner~\cite{Fleiner01,Fleiner03}.
In the analysis of the approximation ratio, we exploit  the fact that the family of feasible sets defined by 
laminar constraints forms a matroid with a property called {\em base-orderability}.

In the last section, we show that the result of Hamada et al. \cite{HMY19} mentioned above 
is generalized to a many-to-one matching setting with laminar constraints on the firm side.
In other words, if we restrict {\sc max-smti-lc} such that each worker is assigned at most one contract 
and each firm has a strict preference, then we can provide 
a worker-strategy-proof mechanism that returns a $\frac{3}{2}$-approximate solution.
We obtain this conclusion using the strategy-proofness result of Hatfield and Milgrom \cite{HM05}.

\bigskip
\noindent\textbf{\textsf{Paper Organization.}}~
The remainder of this paper is organized as follows.
Section~\ref{sec:formulation}  formulates our matching model, while
Section~\ref{sec:alg} describes our algorithm.
Section~\ref{sec:matroids} presents a lemma on base-orderable matroids that is 
the key to our proof of the approximation ratio. 
Sections~\ref{sec:correctness} and \ref{sec:complexity} are 
devoted to the proofs of correctness and time complexity, respectively.
Section~\ref{sec:SP} investigates strategy-proof approximation mechanisms for our model.

\smallskip
Throughout the paper, we denote the set of non-negative integers by $\Zp$.
For a subset $S\subseteq E$ and an element $e\in E$, we denote
$S+e\coloneqq S\cup\{e\}$ and $S-e\coloneqq S\setminus\{e\}$.

\section{Problem Formulation}\label{sec:formulation}
\label{sec:Problem}
An instance of the {\em stable matching with ties and laminar constraints}, which we call {\sc smti-lc}, 
is a tuple $I=(W,F,E,\{\laminar_a,q_a,P_a\}_{a\in W\cup F})$ defined as follows.
Let $W$ and $F$ be disjoint finite sets called {\em workers} and {\em firms}, respectively.
We call $a\in W\cup F$ an {\em agent} when we do not distinguish between workers and firms. 
We are provided a set $E$ of {\em contracts}. Each contract $e\in E$ is associated with one worker and one firm, 
denoted by $\w(e)$ and $\f(e)$, respectively. 
Multiple contracts are allowed to exist between a worker--firm pair.
Then, $(W,F;E)$ is represented as a bipartite multigraph in which $W$ and $F$ are vertex sets, and 
each $e\in E$ is an edge connecting $\w(e)$ and $\f(e)$.
For each $a\in W\cup F$, we denote the set of associated contracts by $E_a$, i.e., 
\[E_w\coloneqq\set{e\in E|\w(e)=w}~~(w\in W),\qquad E_f\coloneqq\set{e\in E|\f(e)=f}~~(f\in F).\]
Then, the family $\set{E_w|w\in W}$ forms a partition of $E$, as does $\set{E_f|f\in F}$.

Each agent $a\in W\cup F$ has a laminar family $\laminar_a$ of subsets of $E_a$ 
and a quota function $q_a\colon\laminar_a\to \Zp$.
For any subset $M\subseteq E$ of contracts and an agent $a\in W\cup F$, we denote by $M_{a}\coloneqq M\cap E_a$
the set of contracts assigned to $a$. 
We say that $M$ is {\em feasible} for $a\in W\cup F$ if 
\[\forall L\in \laminar_a: |M_{a}\cap L|\leq q_a(L).\]
A set $M\subseteq E$ is called a {\em matching} if it is feasible for all agents in $W\cup F$.

Each agent $a\in W\cup F$ has a preference list $P_a$ that consists of all elements in $E_a$ and may contain ties.
In this paper, a preference list is written in one row, from left to right according to preference,
where two or more contracts with equal preference are included in the same parentheses. 
For example, if 
the preference list $P_a$ of an agent $a\in W\cup F$ is represented as
\begin{align*}
&P_a:\  e_2 \ \ ( \ e_1\ \ e_4 \ )\  \ e_3,
\end{align*}
then $e_2$ is $a$'s top choice, $e_1$ and $e_4$ are the second choices with equal preference, and $e_3$ is the last choice.
For contracts $e, e'\in E_a$, we write $e \succ_{a} e'$ if $a$ prefers $e$ to $e'$.
Furthermore, we write $e \succeq_{a} e'$ if $e \succ_{a} e'$ or $a$ is indifferent between $e$ and $e'$ (including the case $e=e'$).

For a matching $M\subseteq E$, a contract $e\in E\setminus M$, and an associated agent $a\in\{\w(e),\f(e)\}$, 
we say that $e$ is {\em free for $a$} in $M$ if 
\smallskip
\begin{itemize}
\item $M_{a}+e$ is feasible for $a$, or 
\item there is $e'\in M_{a}$ such that $e\succ_a e'$ and $M_{a}+e-e'$ is feasible for $a$.
\end{itemize}
\smallskip
In other words, a contract $e$ is free for an agent $a$ if $a$ has an incentive to add $e$ to the current assignment possibly at the expense of some less preferred contract $e'$.
A contract $e\in E\setminus M$ {\em blocks} $M$ if $e$ is free for both $\w(e)$ and $\f(e)$. 
A matching $M$ is {\em stable} if there is no contract in $E\setminus M$ that blocks $M$.

The goal of our problem {\sc max-smti-lc} is
to find a maximum-cardinality stable matching for a given {\sc smti-lc} instance.
Because {\sc max-smti-lc} is a generalization of the NP-hard problem {\sc max-smti}, 
we consider the approximability.
Similarly to the case of {\sc max-smti}, for {\sc max-smti-lc}, a $2$-approximate solution can be easily 
obtained using an arbitrary tie-breaking method (see Proposition~\ref{prop:2-approx} in the Appendix).
In the next section, we present a $\frac{3}{2}$-approximation algorithm.

\bigskip
\begin{remark}\label{rem:example}
We demonstrate that {\large\sc smti-lc} includes several models investigated in previous works, 
which implies that our algorithm finds $\frac{3}{2}$-approximate solutions 
for the problems of finding maximum-cardinality stable matchings in those models with ties.

First, {\large\sc smti} and the stable $b$-matching problem are special cases 
such that $E\subseteq W\times F$ and $\laminar_a=\{E_a\}$ for every $a\in W\cup F$.
Furthermore, the two-sided laminar classified stable matching problem \cite{FK16,CCH10}, if lower quotas are absent, 
is a special case with $E\subseteq W\times F$.

To represent {\large\sc ca} with laminar common quotas \cite{BFIM10},
let $W$ be the set of students and let $F\coloneqq\{f\}$, where $f$ is regarded as a 
consortium of all colleges in  $C$.
The set of contracts is defined by  
$E\coloneqq\set{(w,f,c)| \text{a college $c\in C$ is acceptable for a student $w\in W$}}$,
where $\w\!~\!(e)\!=\!w$, $\f(e)\!=\!f$ for any $e=(w,f,c)$. Note that $E=E_f$.
A quota for a region $C'\subseteq C$ is then represented as
a quota for the set $\set{(w,f,c)\in E| c\in C'}\subseteq E_f$.
Thus, laminar common quotas can be represented as constraints on a laminar family on $E_f$.

For the student-project allocation problem \cite{CM18},
let $W$ and $F$ be the sets of students and lecturers, respectively,
and $E\coloneqq\set{(w,f,p)|\text{a project $p$ acceptable for $w\in W$ is offered by $f\in F$}}$.
Let $E_{f,p}\subseteq E_f$ be the set of contracts associated with a project $p$ offered by a lecturer $f$.
Then, the lecturer's upper quota and projects' upper quotas define 
two-level laminar constraints on
the family $\laminar_f=\{E_f\}\cup\set{E_{f,p}|\text{$p$ is offered by $f$}}$.

For the above-mentioned settings, we can appropriately set the preferences of 
agents such that the stability in the previous works
coincides with the stability in {\large\sc smti-lc}.
\end{remark}

\section{Algorithm}\label{sec:alg}
Our approximation algorithm for {\sc max-smti-lc} consists of three steps:
(i) construction of an auxiliary instance,
(ii) computation of any stable matching of this auxiliary instance,
and (iii) mapping the obtained matching to a matching of the original instance.
In what follows, we describe how to construct an auxiliary instance $I^*$ from a given instance $I$ 
and how to map a matching of $I^*$ to that of $I$.

Let $I=(W,F,E,\{\laminar_a,q_a,P_a\}_{a\in W\cup F})$ be an instance of {\sc max-smti-lc},
where the set $E$ of contracts is represented as $E=\set{e_i|i=1,2,\dots,n}$. 
We construct an auxiliary instance $I^*=(W,F,E^*,\{\laminar^*_a,q^*_a,P^*_a\}_{a\in W\cup F})$, which is also an {\sc smti-lc} instance; however, each preference list $P^*_a$ does not contain ties.

The sets of workers and firms in $I^*$ are the same as those in  $I$.
The set $E^*$ of contracts in $I^*$ is given as $E^*=\set{x_i,y_i,z_i|i=1,2,\dots,n}$, 
where $x_i$, $y_i$, and $z_i$
are copies of $e_i$; hence, $\w(x_i)=\w(y_i)=\w(z_i)=\w(e_i)$ and $\f(x_i)=\f(y_i)=\f(z_i)=\f(e_i)$.
We define a mapping $\pi:2^{E^*}\to 2^E$ by 
$\pi(S^*)=\set{e_i|\{x_i,y_i,z_i\}\cap S^*\neq \emptyset}$ for any $S^*\subseteq E^*$.

For any agent $a\in W\cup F$, the laminar family $\laminar^*_a$ and the quota function $q_a^*:\laminar^*_a\to \Zp$ 
are defined as follows.
For each $e_i\in E_a$, we have $\{x_i,y_i,z_i\}\in \laminar^*_a$ and $q^*_a(\{x_i,y_i,z_i\})=1$.
For each $L\in \laminar_a$, we have $L^*:=\set{x_i,y_i,z_i|e_i\in L}\in \laminar^*_a$
and $q^*_a(L^*)=q_a(L)$. These are all that $\laminar^*_a$ contains.
Then, for any set $M^*\subseteq E^*$ of contracts, we see that $M^*$ is feasible for $a$ in $I^*$ if and only if
$M^*$ contains at most one copy of each $e_i\in E_a$ and 
the set $\pi(M^*)$ is feasible for $a$ in $I$. 

The preference list $P^*_w$ of each worker $w\in W$ is defined as follows.
Take a tie $(e_{i_1} e_{i_2} \cdots e_{i_\ell})$ in $P_w$. 
We replace it with a strict linear order of $2\ell$ contracts 
$x_{i_1} x_{i_2} \cdots x_{i_\ell} y_{i_1} y_{i_2} \cdots y_{i_\ell}$.
Apply this operation to all the ties in $P_w$,
where we regard a contract not included in any tie as a tie of length one.
Next, at the end of the resultant list, 
append the original list $P_w$ with each $e_i$ replaced with $z_i$ and 
all the parentheses omitted.
Here is a demonstration. If the preference list of a worker $w$ is 
\begin{align*}
&P_w:\  ( \ e_2\ \ e_6 \ ) \ \ e_1\ \ (\ e_3 \ \  e_4\ ),
\end{align*}
then  her list in $I^*$ is 
\begin{align*}
&P^*_w:\ \ x_2\ x_6 \ y_2\ y_6 \ x_1\ y_1 \ x_3 \ x_4 \ y_3\ y_4\ z_2\ z_6\ z_1\ z_3\ z_4.
\end{align*}
The preference list $P^*_f$ of each firm $f\in F$ is defined in the same manner, 
where the roles of $x_i$ and $z_i$ are interchanged.
For example, if the preference list of a firm $f$ is
\begin{align*}
&P_f:\  \ e_3\ \ (\ e_2 \ \ e_4\ \ e_7 \ )\ \ e_5,
\end{align*}
then its list in $I^*$ is 
\begin{align*}
&P^*_f:\ \ z_3\ y_3 \ z_2\ z_4 \ z_7\ y_2 \ y_4 \ y_7 \ z_5\ y_5\ x_3\ x_2\ x_4\ x_7\ x_5.
\end{align*}

Thus, we have defined the auxiliary instance $I^*$.
As this is again an {\sc smti-lc} instance, 
a stable matching of $I^*$ is defined as before.
The existence of a stable matching of $I^*$ is guaranteed by the existing framework 
of Fleiner \cite{Fleiner01, Fleiner03}, as will be explained in Section~\ref{sec:complexity}.
Here is the main theorem of this paper, which states that
any stable matching of $I^*$ defines a $\frac{3}{2}$-approximate solution for $I$.

\begin{theorem}\label{thm:main}
For a stable matching $M^*$ of $I^*$, let $M\coloneqq\pi(M^*)$. 
Then, $M$ is a \mbox{stable~matching} of $I$ with $|M|\geq \frac{2}{3}|M_{\rm OPT}|$, where $M_{\rm OPT}$
is a maximum-cardinality stable matching of $I$.
\end{theorem}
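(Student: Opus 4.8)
The plan is to prove the theorem in three parts: (1) $M$ is a matching of $I$; (2) $M$ is stable in $I$; (3) $|M| \geq \frac{2}{3}|M_{\mathrm{OPT}}|$. For part (1), since $M^*$ is feasible for every agent $a$ in $I^*$, the characterization of feasibility in $I^*$ stated in the excerpt tells us that $M^*$ contains at most one copy of each $e_i$ and that $\pi(M^*) = M$ is feasible for $a$ in $I$; doing this for all $a$ shows $M$ is a matching. For part (2), I would argue the contrapositive: if some $e_i \in E\setminus M$ blocks $M$ in $I$, then I claim one of its copies blocks $M^*$ in $I^*$. Since $e_i \notin M$, none of $x_i,y_i,z_i$ lies in $M^*$. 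The delicate point is which copy to use. Because $e_i$ is free for $w = \w(e_i)$ in $M$, either $M_w + e_i$ is feasible or there is a worse $e_j \in M_w$ that can be swapped out. I would pick the copy of $e_i$ that sits highest in $P^*_w$ among $x_i, y_i$ (i.e.\ use $x_i$ if the tie containing $e_i$ in $P_w$ has $x$-copies available, etc.) and check that this copy is free for $w$ in $M^*$: the key observations are that in $P^*_w$ all $x$- and $y$-copies of a tie precede all copies of lower ties, so a worse contract in $I$ maps to a copy that is still worse in $I^*$; and the triple-quota $q^*_w(\{x_j,y_j,z_j\})=1$ combined with the $L^*$-constraints exactly mirrors feasibility of $\pi$, so the swap argument carries over. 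The symmetric statement holds for $f = \f(e_i)$ using the list $P^*_f$ (with $x$ and $z$ interchanged). The subtlety is to choose copies consistently for both sides: I expect that if $e_i$ is free for $w$ via its $x$- or $y$-copy and free for $f$ via its $y$- or $z$-copy, then in fact $y_i$ is free for both, because $y_i$ is the ``middle'' copy that appears in the tie-block of both lists; this common copy $y_i$ then blocks $M^*$, contradicting stability. This case analysis of which copy to use is the main obstacle.

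For part (3), the approximation ratio, I would use the standard Kir\'aly-style argument adapted to the matroid setting, invoking the base-orderability lemma of Section~\ref{sec:matroids}. Consider the symmetric difference $M \triangle M_{\mathrm{OPT}}$; since both are feasible and the family of feasible sets for each agent is a (laminar) matroid, this difference decomposes along each agent into alternating paths and cycles in the usual matroid-exchange sense, so it suffices to bound the ratio on each connected component. The goal is to rule out any augmenting path of length $5$ (i.e.\ with three $M_{\mathrm{OPT}}$-contracts and two $M$-contracts) relative to $M$ — this is exactly what gives the $\frac{3}{2}$ bound. Suppose such a path $e_1, e_2, e_3, e_4, e_5$ exists with $e_1,e_3,e_5 \in M_{\mathrm{OPT}}\setminus M$ and $e_2,e_4 \in M\setminus M_{\mathrm{OPT}}$. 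For each $M_{\mathrm{OPT}}$-contract $e_j$ on the path, since it does not block $M^*$, each of its three copies $x_j,y_j,z_j$ fails to be free for one of its endpoints in $M^*$; tracing which copy fails and at which endpoint, together with the promotion structure ($x$ before $y$ before $z$ for workers, reversed for firms) and the base-orderability of the laminar matroids (which lets us exchange elements one-for-one while preserving feasibility along the path), forces a contradiction with the preference orderings — each of the two $M$-endpoints of the path would have to reject a proposal it should have accepted. The base-orderability lemma is what makes the one-for-one exchange along the path legitimate even in the presence of laminar constraints; without it the matroid exchange could be many-for-one and the counting would break.

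I expect part (2)'s copy-selection argument and part (3)'s exclusion of length-$5$ augmenting paths to be the two substantive steps, with part (3) being the harder one because it must combine three separate pieces: the stability of $M^*$ translated into non-blocking statements about copies, the asymmetric promotion structure of $P^*_w$ versus $P^*_f$, and the base-orderable exchange property of laminar matroids. Parts (1) and the feasibility bookkeeping are routine given the feasibility characterization already established in the excerpt.
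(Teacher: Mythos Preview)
Your parts (1) and (2) are essentially the paper's approach; for (2) you over-complicate the copy selection --- the paper simply shows that $y_i$ is free for both endpoints directly, using that $e_i\succ_a e_j$ forces $y_i\succ^*_a\pi^{-1}(e_j)$ regardless of which copy $\pi^{-1}(e_j)$ is, so no case split on $x$- versus $y$-copies is needed.

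Part (3) has a genuine gap. First, the symmetric difference $M\triangle M_{\mathrm{OPT}}$ does not decompose into alternating paths and cycles in the matroid-intersection setting the way it does for ordinary bipartite matchings; there is no underlying graph on which such components live, and base-orderability does not hand you a global path/cycle decomposition. Second, and more seriously, your direction is inverted: ruling out length-$5$ augmenting paths would not yield the bound. In the one-to-one analogy, $|M|<\tfrac{2}{3}|N|$ forces the existence of a \emph{short} augmenting path (length $1$ or $3$), and it is the short path that must be shown impossible; long paths are harmless for the ratio. The paper instead proves a counting lemma (Lemma~\ref{lem:triple}) that, from $|M|<\tfrac{2}{3}|N|$ together with $M+e\notin\I_W\cap\I_F$ for all $e\in N\setminus M$, directly extracts a single length-$3$ exchange triple $e_i,e_j,e_k$ with $e_i,e_k\in N\setminus M$, $e_j\in M\setminus N$, $M+e_i\in\I_W$, $M+e_k\in\I_F$, and $e_i,e_j$ exchangeable in $\I_F$ while $e_j,e_k$ are exchangeable in $\I_W$ (including the reverse exchanges inside $N$). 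The base-orderability bijections appear only inside the proof of that lemma, via a counting argument --- no component decomposition. Then stability of $M^*$ applied to $z_i$ (free for its worker) and $x_k$ (free for its firm) forces $\pi^{-1}(e_j)\succ^*_f z_i$ and $\pi^{-1}(e_j)\succ^*_{w'} x_k$; since $\pi^{-1}(e_j)$ cannot be both $z_j$ and $x_j$, it must be $y_j$, whence $e_j\succ_f e_i$ and $e_j\succ_{w'} e_k$, so $e_j$ blocks $N$. The contradiction is with the stability of $N=M_{\mathrm{OPT}}$, not of $M$.
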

We prove Theorem~\ref{thm:main} in Section~\ref{sec:correctness}.
This theorem guarantees the correctness of Algorithm~\ref{alg1}.

\begin{algorithm}[htb]\caption{\sf ~$\frac{3}{2}$-approximation algorithm for {\sc max-smti-lc}} \label{alg1}
\begin{algorithmic}[1]
\REQUIRE An instance $I=(W,F,E,\{\laminar_a,q_a,P_a\}_{a\in W\cup F})$.
\ENSURE A stable matching $M$ with $|M|\geq \frac{2}{3}|M_{\rm OPT}|$, where $M_{\rm OPT}$ is an optimal solution.
\STATE Construct an auxiliary instance $I^*$.
\STATE Find any stable matching $M^*$ of $I^*$.
\STATE Let $M=\pi(M^*)$ and return $M$.
\end{algorithmic}
\end{algorithm}

Clearly, the first and third steps of Algorithm~\ref{alg1} can be performed efficiently.
Furthermore, the second step can be executed in polynomial time by 
applying the generalized Gale--Shapley algorithm of Fleiner \cite{Fleiner01, Fleiner03}. 
In Section~\ref{sec:complexity}, we will explain this more precisely and 
present the time complexity represented in the following theorem.

\begin{theorem}\label{thm:alg}
One can find a stable matching $M$ of $I$ with $|M|\geq \frac{2}{3}|M_{\rm OPT}|$
 in $O(k\cdot|E|^2)$ time, where $M_{\rm OPT}$ is a maximum-cardinality stable matching
and $k$ is the maximum level of nesting of laminar families $\laminar_a~(a\in W\cup F)$.
\end{theorem}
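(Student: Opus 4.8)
The plan is to show that Algorithm~\ref{alg1} runs within the claimed bound by accounting for each of its three steps, the dominant cost being the computation of a stable matching of $I^*$ in Step~2. Correctness is already guaranteed by Theorem~\ref{thm:main}, so only the running time needs to be analyzed. First I would observe that $|E^*| = 3|E|$ and that each laminar family $\laminar^*_a$ has size $O(|E_a|)$ and level of nesting at most $k+1$ (we add one innermost layer, the triples $\{x_i,y_i,z_i\}$, on top of a structure isomorphic to $\laminar_a$); hence constructing $I^*$ from $I$ in Step~1, and projecting $M^*$ back via $\pi$ in Step~3, are both clearly doable in $O(|E|)$ or $O(k\cdot|E|)$ time and are dominated by Step~2.

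For Step~2, the key point is that $I^*$ has no ties, so the family of feasible sets for each agent $a$ is a matroid on $E_a^*$ (a laminar/nested matroid, by the discussion preceding Section~\ref{sec:matroids}), and a stable matching of $I^*$ is exactly a matroid-kernel in the sense of Fleiner~\cite{Fleiner01,Fleiner03}. I would invoke Fleiner's generalized Gale--Shapley algorithm, which computes such a kernel by having workers propose along their strict lists while each firm maintains the best feasible subset of the contracts it currently holds. The number of proposals is $O(|E^*|) = O(|E|)$, since every copied contract is proposed at most once. The cost per proposal is the cost of one matroid oracle call on a laminar matroid: given a feasible set $S^*_a$ and a new element, deciding feasibility of $S^*_a + e$ and, if needed, finding the worst element that can be swapped out amounts to checking the at most $k+1$ members of $\laminar^*_a$ containing the relevant elements and updating counters, which is $O(k\cdot|E|)$ with a straightforward implementation (or better with care). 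Multiplying, Step~2 runs in $O(k\cdot|E|^2)$ time, which also bounds the whole algorithm.

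The main obstacle I anticipate is making the per-proposal bound precise: one must specify the data structure for each $\laminar^*_a$ (e.g.\ storing, for every contract and every ancestor set in the laminar family, a running count of assigned copies) and verify that a single acceptance/rejection step — including the promotion/swap that enforces matroid independence after adding a new copied contract — touches only $O(k\cdot|E|)$ such counters and can restore a feasible independent set. I would also need to confirm that Fleiner's framework indeed applies here, i.e.\ that the feasible-set family of each agent in $I^*$ is a matroid and that the preferences on $E^*$ are strict linear orders; both follow directly from the construction in Section~\ref{sec:alg}. Once these bookkeeping facts are in place, the theorem follows by summing the $O(|E|)$ proposals against the $O(k\cdot|E|)$ cost of each, matching the stated $O(k\cdot|E|^2)$ bound.
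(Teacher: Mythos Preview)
Your proposal is correct and follows essentially the same approach as the paper: correctness is deferred to Theorem~\ref{thm:main}, and the running time is obtained by noting $|E^*|=3|E|$, nesting level $k+1$, and then applying Fleiner's matroid-kernel algorithm with a laminar-matroid oracle. The paper packages the last part slightly differently---it bounds the cost of one full choice-function evaluation $C_{\M_W}(S^*)$, $C_{\M_F}(S^*)$ as $O(k^*\cdot|E^*|)$ (Lemma~\ref{lem:oracle}) and multiplies by the $O(|E^*|)$ iterations of Fleiner's batch procedure (Theorem~\ref{thm:Fleiner}), rather than counting individual proposals---but this is the same computation viewed at a different granularity, and your anticipated ``obstacles'' (verifying the matroid structure and specifying the laminar data structure) are exactly what the paper formalizes in Lemmas~\ref{lem:equivalence} and~\ref{lem:oracle}.
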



\section{Base-orderable Matroids}\label{sec:matroids}
For the proofs of Theorems~\ref{thm:main} and ~\ref{thm:alg},
we introduce some concepts related to matroids
(see, e.g., Oxley \cite{Oxley} for more information on matroids). 

For a finite set $E$ and a family $\I\subseteq 2^{E}$, a pair $(E,\I)$ is called a {\em matroid} if the following three conditions hold:
(I1) $\emptyset\in \I$, (I2) $S\subseteq T\in \I$ implies $S\in \I$, and (I3) for any $S,T\in \I$ with $|S|<|T|$,
there exists $e\in T\setminus S$ such that $S+e\in \I$.

For a matroid $(E,\I)$, each member of $\I$ is called an {\em independent set}.
An independent set is called a {\em base} if it is inclusion-wise maximal in $\I$. 
We denote the family of all bases by $\B$.
By the matroid axiom (I3), it follows that $|B_1|=|B_2|$ holds for any bases $B_1, B_2\in \B$.
\begin{definition}[Base-orderable Matroid]
A matroid $(E,\I)$ is called {\em base-orderable} if for any two bases $B_1, B_2\in \B$,
there exists a bijection $\varphi\colon B_1\to B_2$ with the property that,
for every $e\in B_1$, both $B_1-e+\varphi(e)$ and $B_2+e-\varphi(e)$ are bases.
\end{definition}
A class of base-orderable matroids includes {\em gammoids} (see \cite{Brualdi71} and \cite[Theorem 42.12]{Schr03}),
and gammoids include laminar matroids described below (see \cite{FO17} and \cite[Section 2.3.1]{Finkelstein11}).
\begin{example}[Laminar Matroid]\label{prop:laminar-matroid}
For a laminar family $\laminar$ on $E$ 
and a function $q\colon\laminar\to \Zp$, define
$\I=\set{S\subseteq E|\forall L\in \laminar: |S\cap L|\leq q(L)}$.
Then,  $(E,\I)$ is a base-orderable matroid.
\end{example}
A matroid is {\em laminar} if it can be defined in the above-mentioned manner 
for some $\laminar$ and $q$. 

Base-orderability is known to be closed under the following operations
(see, e.g., \cite{BS16, Ingleton77}).
\smallskip
\begin{description}
\setlength{\itemsep}{1.5mm}
\item[Contraction.]\hspace{-2mm}\footnote{\mbox{Contraction is defined for any subset of $E$ \cite{Oxley}; 
however this paper uses only contraction by independent sets.}}
~For a matroid $(E, \I)$ and any $S\in \I$, define 
$\I_{S}\coloneqq\set{T\subseteq E\setminus S|S\cup T\in \I}$.
Then, $(E\setminus S, \I_{S})$ is a matroid.
If $(E, \I)$ is base-orderable, then so is $(E\setminus S,\I_{S})$.

\item[Truncation.] 
~For a matroid $(E, \I)$ and any integer $p\in \Zp$, define 
$\I_{p}\coloneqq\set{S\in \I||S|\leq p}$.
Then, $(E, \I_{p})$ is a matroid.
If $(E, \I)$ is base-orderable, then so is $(E,\I_{p})$.

\item[Direct Sum.]	 
~For matroids $(E_j, \I_j) ~(j=1,2,\dots,\ell)$
such that $E_j$ are all pairwise disjoint,\\
let $E\coloneqq E_1\cup E_2\cup\cdots \cup E_\ell$
and $\I\coloneqq\set{S_1\cup S_2\cup \cdots\cup S_\ell| S_j\in \I_j~(j=1,2,\dots,\ell)}$.
Then, $(E, \I)$ is a matroid.
If all $(E_j, \I_j)$ are base-orderable, then so is $(E,\I)$.
\end{description}

On the intersection of two base-orderable matroids, we show the following property, 
which plays a key role in proving the $\frac{3}{2}$-approximation ratio of our algorithm.
This generalizes the fact that, if (one-to-one) bipartite matchings $M$ and $N$ satisfy $|M|<\frac{2}{3}|N|$, 
then $M\triangle N$ contains a connected component that forms an alternating path of length at most three.
\begin{lemma}\label{lem:triple}
For base-orderable matroids $(E, \I_1)$ and $(E, \I_2)$, suppose that $S,T\in \I_1\cap \I_2$ and 
$|S|<\frac{2}{3}|T|$. If $S+e\not\in \I_1\cap \I_2$ for every $e\in T\setminus S$, then there exist distinct elements $e_{i}, e_{j}, e_{k}$ such that
$e_{i}, e_{k}\in T\setminus S$,
$e_{j}\in S\setminus T$, and the following conditions hold:
\begin{itemize}
\smallskip
\setlength{\itemsep}{0mm}
\item $S+e_{i}\in \I_1$,
\item both $S+e_{i}-e_{j}$ and $T-e_{i}+e_{j}$ belong to $\I_2$,
\item both $S-e_{j}+e_{k}$ and $T+e_{j}-e_{k}$ belong to $\I_1$,
\item $S+e_{k}\in \I_2$.
\end{itemize}
\end{lemma}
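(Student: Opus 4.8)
The plan is to reduce to the case of disjoint $S$ and $T$, locate two large disjoint ``augmenting'' subsets of $T\setminus S$, transport $S$ and $T$ into common base-orderable matroids by truncation so as to invoke base-orderability separately for $\I_1$ and $\I_2$, and finally glue the two resulting exchange bijections by a counting argument powered by $|S|<\tfrac23|T|$. Since $C\coloneqq S\cap T$ is independent in both $\I_1$ and $\I_2$, I would first contract it: the matroids obtained from $\I_1,\I_2$ by contracting $C$ are still base-orderable, and one checks that both the hypotheses and all four desired conclusions for $S,T$ are equivalent to the same statements for $S\setminus T$ and $T\setminus S$ in the contracted matroids, with $|S\setminus T|<\tfrac23|T\setminus S|$ still holding. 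So I may assume $S\cap T=\emptyset$; write $p\coloneqq|S|$ and $d\coloneqq|T|-|S|\ge 1$, and set $A\coloneqq\{e\in T:S+e\in\I_1\}$ and $A'\coloneqq\{e\in T:S+e\in\I_2\}$; the hypothesis ``$S+e\notin\I_1\cap\I_2$ for all $e\in T$'' says exactly that $A\cap A'=\emptyset$. Starting from $S$ and repeatedly applying (I3) against $T$ in $\I_1$ --- legal as long as the current set is smaller than $|T|$ --- I obtain $V'\subseteq T$ with $|V'|=d$ and $S\cup V'\in\I_1$; by (I2) every $v\in V'$ satisfies $S+v\in\I_1$, so $V'\subseteq A$ and hence $|A|\ge d$. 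Symmetrically there is $U'\subseteq T$ with $|U'|=d$, $S\cup U'\in\I_2$, $U'\subseteq A'$, and $|A'|\ge d$. Because $A,A'$ are disjoint subsets of $T$ we get $|T|\ge 2d$, i.e.\ $p\ge d$, while $p=|S|<\tfrac23|T|=\tfrac23(p+d)$ gives $p<2d$; in particular $A,A'\ne\emptyset$. Also $A\cap A'=\emptyset$ together with $U'\subseteq A'$ gives $A\subseteq T\setminus U'$, and symmetrically $A'\subseteq T\setminus V'$.

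Next I would apply base-orderability on each side. In the truncation $(\I_2)_{|T|}$, which is base-orderable and whose bases are exactly the members of $\I_2$ of size $|T|$, both $T$ and $S\cup U'$ are bases; base-orderability supplies a bijection $\varphi_2\colon T\to S\cup U'$ which necessarily fixes $T\cap(S\cup U')=U'$ pointwise, hence restricts to a bijection $T\setminus U'\to S$, and which satisfies $T-e+\varphi_2(e)\in\I_2$ and $(S\cup U')+e-\varphi_2(e)\in\I_2$ for every $e\in T$; for $e\in T\setminus U'$ the latter yields $S+e-\varphi_2(e)\in\I_2$ by (I2). Doing the same with $\I_1$ and $S\cup V'$ gives $\varphi_1\colon T\to S\cup V'$ fixing $V'$, restricting to a bijection $T\setminus V'\to S$, with $T-e+\varphi_1(e)\in\I_1$ and $S-\varphi_1(e)+e\in\I_1$ for every $e\in T$. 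In particular $\varphi_2$ maps $A\subseteq T\setminus U'$ injectively into $S$ and $\varphi_1$ maps $A'\subseteq T\setminus V'$ injectively into $S$.

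Finally, the gluing: $\varphi_2(A)$ and $\varphi_1(A')$ are subsets of $S$ of sizes $|A|\ge d$ and $|A'|\ge d$, while $|S|=p<2d$, so they intersect; choose $e_j$ in the intersection and let $e_i\in A$ and $e_k\in A'$ be the unique preimages with $\varphi_2(e_i)=e_j=\varphi_1(e_k)$. Then $e_i\ne e_k$ since $A\cap A'=\emptyset$, $e_i,e_k\in T\setminus S$, $e_j\in S\setminus T$, and the four bulleted conditions follow immediately: $S+e_i\in\I_1$ because $e_i\in A$; $S+e_i-e_j\in\I_2$ and $T-e_i+e_j\in\I_2$ from $\varphi_2$; $S-e_j+e_k\in\I_1$ and $T+e_j-e_k\in\I_1$ from $\varphi_1$; and $S+e_k\in\I_2$ because $e_k\in A'$. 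Undoing the contraction of $C$ gives the claim for the original $S,T$. I expect the crux to be the bookkeeping in the middle step --- choosing $U'$ and $V'$ so that $A$ and $A'$ respectively land in the ``free'' parts $T\setminus U'$ and $T\setminus V'$, which is what makes $\varphi_2|_A$ and $\varphi_1|_{A'}$ well-defined maps \emph{into} $S$ --- after which the role of the hypothesis $|S|<\tfrac23|T|$ is precisely to guarantee, via pigeonhole on $S$, that these two images overlap.
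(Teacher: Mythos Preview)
Your proof is correct and follows essentially the same strategy as the paper's: augment $S$ in each matroid via (I3), invoke base-orderability (after a contraction/truncation) to obtain two exchange bijections, and use a counting argument driven by $|S|<\tfrac23|T|$ to locate the common pivot $e_j$. The only cosmetic differences are that you first contract $S\cap T$ to reduce to the disjoint case and orient the bijections $T\to S\cup(\cdot)$ (using that any base-ordering bijection must fix the common part pointwise), which lets you finish with a one-line pigeonhole on $S$ instead of the paper's slightly longer count on $S\setminus T$.
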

\begin{proof}
By the matroid axiom (I3), there is a subset $A_1\subseteq T\setminus S$
such that $|A_1|=|T|-|S|$ and $S_1\coloneqq S\cup A_1\in \I_1$.
Then, $|S_1|=|T|$; hence, $|S_1\setminus T|=|T\setminus S_1|$.
Let $(E',\I'_1)$ be a matroid obtained from $(E, \I_1)$ by 
contracting $S_1\cap T$ and truncating with size $|S_1\setminus T|$,
i.e., $E'=E\setminus(S_1\cap T)$ and $\I'_1\coloneqq\set{R\subseteq E'| R\cup (S_1\cap T)\in \I_1, |R|\leq |S_1\setminus T|}$.
Then, $S_1\setminus T$ and $T\setminus S_1$ are bases of $(E',\I'_1)$.
As $(E',\I'_1)$ is base-orderable, 
there is a bijection $\varphi_1\colon S_1\setminus T\to T\setminus S_1$ such that 
both $(S_1\setminus T)-e+\varphi_1(e)$ and $(T\setminus S_1)+e-\varphi_1(e)$ are bases of $(E',\I'_1)$
for every $e\in S_1\setminus T$.
By the definition of $\I'_1$, this implies
that both $S-e+\varphi_1(e)$ and $T+e-\varphi_1(e)$ belong to $\I_1$ for every $e\in S_1\setminus T$.
By the same argument, there exists $A_2\subseteq T\setminus S$  
such that $|A_2|=|T|-|S|$ and $S_2\coloneqq S\cup A_2\in \I_2$, and 
there exists a bijection $\varphi_2\colon S_2\setminus T\to T\setminus S_2$ such that
both $S-e+\varphi_2(e)$ and $T+e-\varphi_2(e)$ belong to $\I_2$
for every $e\in S_2\setminus T$. 

We represent $\varphi_1$ and $\varphi_2$ using a bipartite graph as follows. 
Note that, for each $\ell\in\{1,2\}$, we have $S_\ell\setminus T=S\setminus T$ and 
$T\setminus S_\ell=T\setminus (S\cup A_\ell)\subseteq T\setminus S$.
Let $S\setminus T$ and $T\setminus S$ be two vertex sets
and let \mbox{$M_\ell\coloneqq\set{(e,\varphi_\ell(e))|e\in S\setminus T}$} for $\ell=1,2$.
Then, each $M_\ell$ is a \mbox{one-to-one} matching that covers $S\setminus T$ and $T\setminus (S\cup A_\ell)$.
Note that the sets $A_1, A_2\subseteq S\setminus T$ are mutually disjoint since, otherwise, 
some $e\in A_1\cap A_2$ satisfies $S+e\in \I_1\cap\I_2$, which contradicts the assumption.
Then,  $|T\setminus (S\cup A_1\cup A_2)|=|T\setminus S|-|A_1|-|A_2|=|T\setminus S|-2|T|+2|S|$.
Therefore, at most $2(|T\setminus S|-2|T|+2|S|)$ vertices in $S\setminus T$ 
are adjacent to $T\setminus (S\cup A_1\cup A_2)$ via the edges in $M_1\cup M_2$.
Because $|S\setminus T|-2(|T\setminus S|-2|T|+2|S|)=-3|S|+2|T|+|S\cap T|>-3\cdot\frac{2}{3}|T|+2|T|+|S\cap T|\geq0$,
there exists $\tilde{e}\in S\setminus T$ that is not adjacent to $T\setminus (S\cup A_1\cup A_2)$
via $M_1\cup M_2$.
This implies that  $\varphi_2(\tilde{e})\in A_1$ and $\varphi_1(\tilde{e})\in A_2$;
hence, $S+\varphi_2(\tilde{e})\in \I_1$ and $S+\varphi_1(\tilde{e})\in \I_2$.
Let $e_{i}\coloneqq\varphi_2(\tilde{e})$, $e_{j}\coloneqq \tilde{e}$, and $e_{k}\coloneqq\varphi_1(\tilde{e})$.
Then, these three elements satisfy all the required conditions.
\end{proof}

\section{Correctness}\label{sec:correctness}
This section is devoted to showing Theorem~\ref{thm:main},
which establishes the correctness of Algorithm~\ref{alg1}.

As in Section~\ref{sec:alg}, let $I$ be an {\sc smti-lc} instance with $E=\set{e_i|i=1,2,\dots,n}$ and 
let $I^*$ be the auxiliary instance $I^*$, 
whose contract set is $E^*=\set{x_i,y_i,z_i|i=1,2,\dots,n}$.

For any agent $a\in W\cup F$, let $E^*_a=\set{x_i,y_i,z_i|e_i\in E_a}$ and 
define families $\I_a$  and  $\I^*_a$  by
\begin{align*}
\I_a&=\set{~\!S~\subseteq E_a|\forall L~\!\in \laminar_a: ~\!|~\!S~\!\cap L~\!|\leq q_a(~\!\!L~\!)},\\
\I^*_a&=\set{S^*\subseteq E^*_a|\forall L^*\!\in \laminar^*_a: |~\!S^*\!\cap L^*|\leq q^*_a(L^*)},
\end{align*}
i.e., $\I_a$ and $\I^*_a$ are the families of feasible sets in $I$ and $I^*$, respectively.
Then, $(E_a, \I_a)$ and $(E^*_a,\I^*_a)$ are laminar matroids and base-orderable.
The definitions of $\laminar^*_a$ and $q^*_a$ imply the following fact.
Recall that $\pi:2^{E^*}\to 2^E$ is defined by $\pi(S^*)=\set{e_i|\{x_i,y_i,z_i\}\cap S^*\neq \emptyset}$. 
\begin{observation}\label{obs:feasible}
For a set $S^*\subseteq E^*_a$, we have $S^*\in \I^*_a$
if and only if $|\{x_i,y_i,z_i\}\cap S^*|\leq 1$ for every $e_i\in E_a$ and $\pi(S^*)\in \I_a$.
\end{observation}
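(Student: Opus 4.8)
The plan is to unfold the definitions of $\laminar^*_a$ and $q^*_a$ and read off the equivalence directly. Recall that $\laminar^*_a$ consists of exactly the triples $\{x_i,y_i,z_i\}$ for $e_i\in E_a$, each with quota $1$, together with the sets $L^*=\set{x_i,y_i,z_i|e_i\in L}$ for $L\in\laminar_a$, each with quota $q_a(L)$, and nothing else. Hence, by definition, $S^*\in\I^*_a$ is the conjunction of two families of inequalities: (a) $|S^*\cap\{x_i,y_i,z_i\}|\leq 1$ for every $e_i\in E_a$, and (b) $|S^*\cap L^*|\leq q_a(L)$ for every $L\in\laminar_a$. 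Condition (a) is verbatim the condition ``$|\{x_i,y_i,z_i\}\cap S^*|\leq 1$ for every $e_i\in E_a$'' appearing in the statement, so it remains only to show that, under (a), condition (b) is equivalent to $\pi(S^*)\in\I_a$.

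The one substantive step I would isolate is the identity $|S^*\cap L^*| = |\pi(S^*)\cap L|$, valid for every $L\in\laminar_a$ whenever (a) holds. To see it, note that under (a) each $e_i\in E_a$ contributes $|S^*\cap\{x_i,y_i,z_i\}|\in\{0,1\}$ to $|S^*\cap L^*|$, and this contribution equals $1$ exactly when $\{x_i,y_i,z_i\}\cap S^*\neq\emptyset$, i.e.\ exactly when $e_i\in\pi(S^*)$; summing these contributions over $e_i\in L$ yields $|\pi(S^*)\cap L|$. Granting this identity, condition (b) becomes ``$|\pi(S^*)\cap L|\leq q_a(L)$ for every $L\in\laminar_a$'', which is precisely the definition of $\pi(S^*)\in\I_a$. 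Both implications of the claimed ``if and only if'' then follow immediately: the forward direction reads (a) off $S^*\in\I^*_a$ and combines it with (b) and the identity to get $\pi(S^*)\in\I_a$; the backward direction uses (a) together with $\pi(S^*)\in\I_a$ and the identity to recover (b), hence $S^*\in\I^*_a$.

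I do not expect any real obstacle here: the observation is pure bookkeeping forced by the construction of $I^*$. The only place where care is needed is that the identity $|S^*\cap L^*|=|\pi(S^*)\cap L|$ genuinely requires hypothesis (a)---without it a set $S^*$ containing, say, both $x_i$ and $y_i$ would make the left-hand side strictly larger than the right---which is exactly why the singleton quotas on the triples are included in $\laminar^*_a$ and why condition (a) must be stated explicitly rather than being absorbed into ``$\pi(S^*)\in\I_a$''.
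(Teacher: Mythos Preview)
Your proof is correct and follows exactly the approach the paper intends: the paper states this as an observation without proof, treating it as immediate from the construction of $\laminar^*_a$ and $q^*_a$, and your argument simply makes explicit the one computation needed, namely that under condition~(a) one has $|S^*\cap L^*|=|\pi(S^*)\cap L|$ for each $L\in\laminar_a$. There is nothing to add or correct.
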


Take any stable matching $M^*$ of $I^*$ and let $M\coloneqq \pi(M^*)$.
As $M^*$ is feasible in $I^*$, it contains at most one copy of each contract $e_i$.
For any $e_i\in M$, we denote by $\pi^{-1}(e_i)$ the unique element in $\{x_i,y_i,z_i\}\cap M^*$.

By the definitions of the preference lists $\{P^*_a\}_{a\in W\cup F}$ in $I^*$, we can observe the following properties.
For any agent $a\in W\cup F$ and contracts $e,e'\in E^*_a$, we write $e\succ^*_a e'$ 
if $a$ prefers $e$ to $e'$ with respect to $P^*_a$. 
Recall that $P^*_a$ does not contain ties,  while $P_a$ may contain.
\begin{observation}\label{obs:preference}
For any $e_i\in E\setminus M$ and $e_j\in M$, the following conditions hold.  
\begin{itemize}
\smallskip
\setlength{\itemsep}{2mm}
\item For any agent $a\in W\cup F$, if $e_i,e_j\in E_a$ and $e_i\succ_a e_j$, 
then $y_i\succ_a^* \pi^{-1}(e_j)$ holds regardless of which of $\{x_i,y_i,z_i\}$ is $\pi^{-1}(e_i)$.
\item For any worker $w\in W$, if $e_i,e_j\in E_w$ and $\pi^{-1}(e_j)\succ_w^* x_i$,
then we have either \\
{\rm[} $\pi^{-1}(e_j)=x_j$ and $e_j\succeq_w e_i$ {\rm]}  or {\rm[} $\pi^{-1}(e_j)=y_j$ and $e_j\succ_w e_i$ {\rm]}.
\item For any firm $f\in F$, if $e_i,e_j\in E_f$ and $\pi^{-1}(e_j)\succ_f^* z_i$,
then we have either\\
{\rm[}~$\pi^{-1}(e_j)=z_j$ and $e_j\succeq_f e_i$ {\rm]} or {\rm[} $\pi^{-1}(e_j)=y_j$ and $e_j\succ_f e_i$ {\rm]}.
\end{itemize}
\end{observation}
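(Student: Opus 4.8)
The plan is to prove Observation~\ref{obs:preference} by direct inspection of how the strict list $P^*_a$ is built from $P_a$. First I would record the structural features of $P^*_w$ that the construction guarantees. For a worker $w$, the list $P^*_w$ splits into two consecutive blocks: an \emph{initial block} listing exactly $\set{x_i,y_i|e_i\in E_w}$, obtained by replacing each tie $(e_{i_1}\cdots e_{i_\ell})$ of $P_w$ — the ties taken in the order they occur in $P_w$ — by $x_{i_1}\cdots x_{i_\ell}\,y_{i_1}\cdots y_{i_\ell}$; and a \emph{final block} listing exactly $\set{z_i|e_i\in E_w}$ in a strict order refining $P_w$. Two consequences I would isolate as a preliminary claim: (a) for $e_i,e_j\in E_w$, if $e_i\succ_w e_j$ then $e_i$ lies in a tie strictly earlier than that of $e_j$, so the whole sub-block of $e_i$'s tie — in particular $y_i$ — precedes every copy of $e_j$ in $P^*_w$; and (b) inside a single tie's sub-block all $x$-copies precede all $y$-copies, while the entire initial block precedes the entire final block. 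The list $P^*_f$ of a firm $f$ has the identical description with the roles of $x_i$ and $z_i$ interchanged.

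Item 1 then follows immediately from (a) applied with $a$ a worker (the firm case being the same with $x$ and $z$ swapped): $e_i\succ_a e_j$ forces $y_i\succ_a^* e'$ for every $e'\in\{x_j,y_j,z_j\}$, hence $y_i\succ_a^*\pi^{-1}(e_j)$, independently of which copy of $e_i$ is $\pi^{-1}(e_i)$. For item 2, fix a worker $w$ with $\pi^{-1}(e_j)\succ_w^* x_i$; since $x_i$ is in the initial block but $z_j$ would be in the final block, $\pi^{-1}(e_j)\ne z_j$. If $\pi^{-1}(e_j)=x_j$, then $x_j\succ_w^* x_i$ is possible only when $e_j$'s tie precedes or equals $e_i$'s tie, which gives $e_j\succeq_w e_i$. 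If $\pi^{-1}(e_j)=y_j$, then $y_j\succ_w^* x_i$ rules out $e_i,e_j$ lying in the same tie (there $x_i$ precedes $y_j$ by~(b)) and rules out $e_j$'s tie coming later, so $e_j$'s tie strictly precedes $e_i$'s, giving $e_j\succ_w e_i$. Item 3 is proved in exactly the same way using $P^*_f$, with $z_i$ in place of $x_i$ and the $z$-before-$y$ ordering of~(b) inside each tie's sub-block.

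There is no genuine obstacle here beyond bookkeeping: the only points needing care are part~(b) — remembering that within one tie's sub-block the $x$-copies (for a worker) come before the $y$-copies, and that the $z$-block comes last — together with the left–right symmetry that lets the firm-side statements be read off from the worker-side ones by the substitution $x\leftrightarrow z$. Stating (a) and (b) explicitly up front makes each of the three items reduce to a two- or three-line case check, so I would present the observation's proof in that order: preliminary claim, then item 1, then the case analyses for items 2 and 3.
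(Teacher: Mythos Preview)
Your proposal is correct and is exactly the direct verification the paper implicitly intends: in the paper this statement is given as an ``observation'' immediately after the construction of $P^*_a$, with the sole justification being ``By the definitions of the preference lists $\{P^*_a\}_{a\in W\cup F}$ in $I^*$, we can observe the following properties,'' and your structural claims~(a) and~(b) together with the case analysis are precisely what is needed to unpack that sentence.
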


First, we show the stability of $M$ in $I$.
For each agent $a\in W\cup F$, we write $M^*_a= M^*\cap E^*_a$, which implies that $\pi(M^*_a)=M_a$.

\begin{lemma}\label{lem:stable}
The set $M$ is a stable matching of $I$.
\end{lemma}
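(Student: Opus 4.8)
The plan is to prove the contrapositive-style statement directly: assuming some contract $e_i \in E \setminus M$ blocks $M$ in $I$, I would produce a copy of $e_i$ (specifically $y_i$, or possibly $x_i$ or $z_i$) that blocks $M^*$ in $I^*$, contradicting the stability of $M^*$. First I would unpack what ``$e_i$ blocks $M$'' means: $e_i$ is free for $w \coloneqq \w(e_i)$ and free for $f \coloneqq \f(e_i)$. By definition of free, for the worker side either $M_w + e_i \in \I_w$, or there is $e_j \in M_w$ with $e_i \succ_w e_j$ and $M_w + e_i - e_j \in \I_w$; symmetrically for the firm side with some $e_k \in M_f$.

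The heart of the argument is to show that the copy $y_i$ is free for both $w$ and $f$ in $I^*$. Consider the worker side. If $M_w + e_i \in \I_w$, then since $M^*_w$ contains exactly one copy of each contract in $M_w$ and $y_i$ is a copy of a contract not in $M_w$, Observation~\ref{obs:feasible} gives $M^*_w + y_i \in \I^*_w$, so $y_i$ is free for $w$. Otherwise there is $e_j \in M_w$ with $e_i \succ_w e_j$ and $M_w + e_i - e_j \in \I_w$; then by the first bullet of Observation~\ref{obs:preference}, $y_i \succ^*_w \pi^{-1}(e_j)$, and by Observation~\ref{obs:feasible} applied to $\pi(M^*_w + y_i - \pi^{-1}(e_j)) = M_w + e_i - e_j$ we get $M^*_w + y_i - \pi^{-1}(e_j) \in \I^*_w$. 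Hence $y_i$ is free for $w$ in $I^*$. The same reasoning with $f$, $e_k$, and the firm's preference (again using the first bullet of Observation~\ref{obs:preference}, which is symmetric in $w$ and $f$) shows $y_i$ is free for $f$ in $I^*$. Therefore $y_i$ blocks $M^*$, a contradiction; thus no such $e_i$ exists and $M$ is stable. I should also note at the outset that $M = \pi(M^*)$ is feasible in $I$: this is immediate from Observation~\ref{obs:feasible}, since $M^*_a \in \I^*_a$ for every agent $a$ forces $\pi(M^*_a) = M_a \in \I_a$.

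One subtlety to handle carefully: in the ``free'' definition the swapped-out contract $e_j \in M_w$ could in principle coincide with $e_i$'s situation in a degenerate way, but since $e_i \notin M$ and $e_j \in M$ they are genuinely distinct, so $\pi^{-1}(e_j)$ is well-defined and distinct from any copy of $e_i$; the feasibility bookkeeping in Observation~\ref{obs:feasible} then goes through cleanly. The main obstacle I anticipate is not any deep difficulty but rather making sure the two cases of the ``free'' definition are both translated correctly through the projection $\pi$ — in particular that removing $\pi^{-1}(e_j)$ from $M^*_w$ and adding $y_i$ yields a set whose image under $\pi$ is exactly $M_w + e_i - e_j$, so that Observation~\ref{obs:feasible} applies. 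Since the construction of $I^*$ is symmetric in workers and firms, once the worker side is done the firm side is verbatim the same, which keeps the proof short.
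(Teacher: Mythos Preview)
Your proposal is correct and follows essentially the same route as the paper's proof: show $M$ is a matching via Observation~\ref{obs:feasible}, then assume some $e_i\in E\setminus M$ blocks $M$ and argue that $y_i$ is free for both $w=\w(e_i)$ and $f=\f(e_i)$ in $M^*$, handling the two cases of ``free'' via Observation~\ref{obs:feasible} for feasibility and the first bullet of Observation~\ref{obs:preference} for the preference comparison $y_i\succ^*_a \pi^{-1}(e_j)$. Your remark that the first bullet of Observation~\ref{obs:preference} is stated for any agent $a\in W\cup F$, so the firm side is verbatim the same, matches exactly how the paper dispatches that half.
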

\begin{proof}
Since $M^*$ is feasible for all agents in $I^*$, 
Observation~\ref{obs:feasible} implies that $M=\pi(M^*)$ is feasible for all agents in $I$, i.e., $M$ is a matching in $I$.

Suppose, to the contrary, that $M$ is not stable. 
Then,  some contract $e_i\in E\setminus M$ blocks $M$. 
Let $w=\w(e_i)$ and $f=\f(e_i)$.
Then, $e_i$ is free for both $w$ and $f$ in $M$.
We now show that $y_i$ is free for both $w$ and $f$ in $M^*$, which contradicts the stability of $M^*$.

As $e_i$ is free for $w$ in $I$, we have (i) $M_w+e_i\in \I_w$ or 
(ii) there exists $e_j\in M_a$ such that $e_i\succ_w e_j$ and $M_a+e_i-e_j\in \I_w$.
Note that $e_i \in E\setminus M$ implies $\{x_i,y_i,z_i\}\cap M^*=\emptyset$.
In case (i), we have $\pi(M^*_w+y_i)=M_w+e_i\in \I_w$, which implies $M^*_w+y_i\in \I^*_w$
; hence, $y_i$ is free for $w$ in $M^*$.
In case (ii), we have $\pi(M^*_w+y_i-\pi^{-1}(e_j))=M_w+e_i-e_j\in \I_w$,
which implies $M^*_w+y_i-\pi^{-1}(e_j)\in \I^*_w$. 
Furthermore, as $e_i\succ_w e_j$, the first statement of Observation~\ref{obs:preference} implies $y_i\succ^*_w \pi^{-1}(e_j)$. 
Thus, in each case, $y_i$ is free for $w$ in $M^*$.

Similarly, we can show that $y_i$ is free for $f$ in $M^*$.
Thus, $y_i$ blocks $M^*$, a contradiction.
\end{proof}

Next, we show the approximation ratio using Lemma~\ref{lem:triple}.
Note that $\set{E_w|w\in W}$ is a partition of $E$, as is $\set{E_f|f\in F}$.
Let $(E,\I_{W})$ be the direct sum of base-orderable matroids 
$\set{(E_w,\I_w)|w\in W}$ and 
$(E,\I_{F})$ be the direct sum of $\set{(E_f,\I_f)|f\in F}$.
Then, they are both base-orderable matroids on $E$.

By the definitions of $\I_{W}$ and $\I_{F}$, for any subset $N\subseteq E$, we have $N\in \I_{W}\cap \I_{F}$ if and only if 
$N_a\coloneqq N\cap E_a$ is feasible for each $a\in W\cup F$, i.e., $N$ is a matching.
Furthermore, for any matching $N\in \I_{W}\cup \I_{F}$ and 
contract $e_i\in E\setminus N$, which is associated with a worker $w=\w(e_i)$ (and a firm $f=\f(e_i)$),
the condition $N+e_i\in \I_W$ is equivalent to $N_w+e_i\in \I_w$.
In addition, if $N+e_i\not\in \I_W$, we have $N+e_i-e_j\in \I_W$ if and only if 
$e_j\in N_w$ and $N_w+e_i-e_j\in \I_w$.
The same statements hold when $w$ and $W$ are replaced with $f$ and $F$, respectively.

\begin{lemma}\label{lem:approx}
The set $M$ satisfies $|M|\geq \frac{2}{3}|M_{\rm OPT}|$, where $M_{\rm OPT}$
is a maximum-cardinality stable matching of $I$.
\end{lemma}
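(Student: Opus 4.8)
The plan is to derive a contradiction by assuming $|M| < \frac{2}{3}|M_{\rm OPT}|$ and then producing, via Lemma~\ref{lem:triple}, a contract that blocks $M^*$. Set $S \coloneqq M$ and $T \coloneqq M_{\rm OPT}$, both of which lie in $\I_W \cap \I_F$ since they are matchings of $I$. If for some $e_i \in T \setminus S$ we had $S + e_i \in \I_W \cap \I_F$, then $e_i$ would be free for both $\w(e_i)$ and $\f(e_i)$ in $M$ (the feasibility condition $M_a + e_i \in \I_a$ holds on both sides), making $e_i$ a blocking contract and contradicting Lemma~\ref{lem:stable}. Hence $S + e_i \notin \I_W \cap \I_F$ for every $e_i \in T \setminus S$, and the hypotheses of Lemma~\ref{lem:triple} are met (with $\I_1 = \I_W$, $\I_2 = \I_F$, say). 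The lemma yields distinct $e_i, e_k \in T \setminus S$ and $e_j \in S \setminus T$ with the four listed independence properties.

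The core of the argument is then to show that these three contracts force $y_i$ or $y_k$ to block $M^*$, contradicting the stability of $M^*$. I would first translate the four matroid conditions into statements about individual agents. Write $w = \w(e_i)$, $f = \f(e_j)$ — but one must be careful about which agents $e_i, e_j, e_k$ actually share. The conditions "$S + e_i - e_j \in \I_W$'' and "$T - e_i + e_j \in \I_F$'' (using whichever of $\I_W, \I_F$ corresponds to $\I_1, \I_2$) are swaps, so $e_i$ and $e_j$ share an agent on one side, and $e_j$ and $e_k$ share an agent on the other side. Concretely, $S + e_i - e_j \in \I_W$ means $e_i, e_j$ have a common worker $w$ with $M_w + e_i - e_j \in \I_w$; similarly $S - e_j + e_k \in \I_F$ means $e_j, e_k$ have a common firm $f$ with $M_f + e_k - e_j \in \I_f$. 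Also $S + e_i \in \I_W$ means $M_w + e_i \in \I_w$ (so $e_i$ is free for its firm side... ), and $S + e_k \in \I_F$ means $M_f + e_k \in \I_f$.

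From here the plan is a case analysis on the positions of $\pi^{-1}(e_j)$ in the two lists $P^*_w$ and $P^*_f$, using Observation~\ref{obs:preference}. Since $e_i \notin M$ and $e_i$ is free (in the $\I_w$ sense) at the expense of $e_j$, and since $M^*$ is stable so $y_i$ does not block $M^*$, the worker $w$ must "reject'' $y_i$: that is $\pi^{-1}(e_j) \succ^*_w y_i$, hence $\pi^{-1}(e_j) \succ^*_w x_i$, so by the second bullet of Observation~\ref{obs:preference} either $\pi^{-1}(e_j) = x_j$ with $e_j \succeq_w e_i$, or $\pi^{-1}(e_j) = y_j$ with $e_j \succ_w e_i$. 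Symmetrically, considering $y_k$ not blocking $M^*$ on the firm $f$, we get $\pi^{-1}(e_j) \succ^*_f z_k$, so either $\pi^{-1}(e_j) = z_j$ with $e_j \succeq_f e_k$, or $\pi^{-1}(e_j) = y_j$ with $e_j \succ_f e_k$. Combining: $\pi^{-1}(e_j)$ cannot simultaneously be $x_j$ (forced by the worker side unless it is $y_j$) and $z_j$ (forced by the firm side unless it is $y_j$), so we must have $\pi^{-1}(e_j) = y_j$ on at least one side — and then the contradiction emerges because $y_j$ being the copy selected means $e_j \succ_w e_i$ and $e_j \succ_f e_k$ strictly. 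I would then chase through why even this is impossible: the strict preferences $e_j \succ_w e_i$, $e_j \succ_f e_k$ together with the swap-feasibility of putting $e_i$ (resp. $e_k$) in place of $e_j$, plus $M_w + e_i \in \I_w$ or $M_f + e_k \in \I_f$, should let us construct an actual blocking contract for $M^*$ — most likely $y_i$ blocks via $w$ (since $M^*_w + y_i - y_j$ or $M^*_w + y_i$ is feasible and $y_i \succ^*_w$ some copy in $M^*_w$) while checking $y_i$ is also free for $f$, or vice versa. The main obstacle I anticipate is exactly this final feasibility bookkeeping: matching the matroid-swap conditions from Lemma~\ref{lem:triple} against the "free for $a$'' definition in $I^*$, and making the case distinction on which of $x_j, y_j, z_j$ equals $\pi^{-1}(e_j)$ airtight on both sides simultaneously, since the roles of $x$ and $z$ are swapped between workers and firms and one must track which of $\I_1, \I_2$ was assigned to which side when invoking the lemma.
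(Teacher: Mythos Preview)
Your proposal has the right skeleton—apply Lemma~\ref{lem:triple}, then force $\pi^{-1}(e_j)=y_j$ by a case analysis—but the endgame is off in two linked ways.

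First, you aim to produce a contract blocking $M^*$. That cannot close the argument: you have already spent the stability of $M^*$ (on the copies of $e_i$ and $e_k$) to derive the preference inequalities, and nothing is left to contradict there. The paper's contradiction is instead that $e_j$ blocks $N=M_{\rm OPT}$. For this you must use the two $T$-conditions from Lemma~\ref{lem:triple} that you never mention, namely $N_f-e_i+e_j\in\I_f$ and $N_{w'}+e_j-e_k\in\I_{w'}$; these are exactly the feasibility swaps making $e_j$ free for both its agents in $N$, once you know $e_j\succ_f e_i$ and $e_j\succ_{w'} e_k$ \emph{strictly}.

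Second—and this is why ``strictly'' matters—the copies you test against $M^*$ should be the extreme ones, not the $y$-copies. In the paper's orientation ($e_i,e_j$ share a firm $f$; $e_j,e_k$ share a worker $w'$) one uses $z_i$ (free for $w=\w(e_i)$ since $M^*_w+z_i\in\I^*_w$, so stability forces $\pi^{-1}(e_j)\succ^*_f z_i$) and $x_k$ (free for $f'=\f(e_k)$, so stability forces $\pi^{-1}(e_j)\succ^*_{w'} x_k$). Observation~\ref{obs:preference} then yields, in the $y_j$ case, strict preferences $e_j\succ_f e_i$ and $e_j\succ_{w'} e_k$. Your choice of $y_i,y_k$ only yields weak preferences, which do not suffice for $e_j$ to block $N$. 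Relatedly, your implication ``$\pi^{-1}(e_j)\succ^*_w y_i$, hence $\pi^{-1}(e_j)\succ^*_w x_i$'' is false: in $P^*_w$ the $x$-copies precede the $y$-copies within each tie, so beating $y_i$ is a \emph{weaker} condition than beating $x_i$. (Your worry about assigning $\I_1,\I_2$ is only bookkeeping—either assignment works by symmetry—but note that with $\I_1=\I_W$, $\I_2=\I_F$ the swap $S+e_i-e_j\in\I_2$ places $e_i,e_j$ at a common \emph{firm}, not a common worker.)
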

\begin{proof}
Set $N\coloneqq M_{\rm OPT}$ for notational simplicity. 
Since $M$ and $N$ are stable matchings, $M,N\in \I_{W}\cap \I_{F}$.
In addition, $M+e_i\not\in \I_{W}\cap \I_{F}$ for any $e_i\in N\setminus M$
since, otherwise, $e_i$ blocks $M$.
Suppose, to the contrary, that $|M|<\frac{2}{3}|N|$.
Then, by Lemma~\ref{lem:triple} and the definitions of $\I_{W}$ and $\I_{F}$, 
there exist three contracts $e_i,e_j,e_k$ such that
$e_i,e_k\in N\setminus M$,
$e_j\in M\setminus N$, and the following conditions hold:
\begin{itemize}
\smallskip
\setlength{\itemsep}{0mm}
\item $M_{w}+e_i\in \I_w$,
\item both $M_{f}+e_i-e_j$ and $N_f-e_i+e_j$ belong to $\I_{f}$, 
\item both $M_{w'}-e_j+e_k$ and $N_{w'}+e_j-e_k$ belong to $\I_{w'}$, 
\item $M_{f'}+e_k\in \I_{f'}$, 
\end{itemize}
\smallskip
where $w=\w(e_i)$, $f=\f(e_i)=\f(e_j)$, $w'=\w(e_j)=\w(e_k)$, $f'=\f(e_k)$.

\smallskip
Since $e_i\not\in M$ and $M_{w}+e_i\in \I_w$, we have $M^*_w+z_i\in \I^*_w$; hence, $z_i$ is free for the worker $w=\w(z_i)$ in $M^*$.
Then, the stability of $M^*$ implies that $z_i$ is not free for the firm $f=\f(z_i)$.
Since $\pi(M^*_f+z_i-\pi^{-1}(e_j))=M_{f}+e_i-e_j\in \I_{f}$ implies $M^*_{f}+z_i-\pi^{-1}(e_j)\in \I^*_{f}$,
we should have $\pi^{-1}(e_j)\succ^*_f z_i$.
Then, the third statement of Observation~\ref{obs:preference} implies that
we have either [$\pi^{-1}(e_j)=z_j$ and $e_j\succeq_f e_i$] or [$\pi^{-1}(e_j)=y_j$ and $e_j\succ_f e_i$].

Meanwhile, since $e_k\not\in M$ and $M_{f'}+e_k\in \I_{f'}$,
we have $M^*_{f'}+x_k\in \I^*_{f'}$; hence, $x_k$ is free for the firm $f'=\w(x_k)$ in $M^*$.
As $M^*$ is stable, then $x_k$ is not free for the worker $w'=\w(x_k)$.
Since $\pi(M^*_{w'}+x_k-\pi^{-1}(e_j))=M_{w'}+e_k-e_j\in \I_{w'}$ 
implies $M^*_{w'}+x_k-\pi^{-1}(e_j)\in \I^*_{w'}$,
we should have $\pi^{-1}(e_j)\succ^*_{w'} x_k$.
Then, the second statement of Observation~\ref{obs:preference} implies that
we have either [$\pi^{-1}(e_j)=x_j$ and $e_j\succeq_{w'} e_k$] or [$\pi^{-1}(e_j)=y_j$ and $e_j\succ_{w'} e_k$].

Because we cannot have $\pi^{-1}(e_j)=z_j$ and $\pi^{-1}(e_j)=x_j$ simultaneously,
we must have $\pi^{-1}(e_j)=y_j$, $e_j\succ_f e_i$, and $e_j\succ_{w'} e_k$.
As we have $N_f-e_i+e_j\in \I_{f}$ and $N_{w'}+e_j-e_k\in \I_{w'}$, 
these preference relations imply that $e_j$ blocks $N$,
which contradicts the stability of $N$.
\end{proof}

\begin{proof}[Proof of Theorem~\ref{thm:main}]
Combining Lemmas~\ref{lem:stable} and \ref{lem:approx}, we obtain Theorem~\ref{thm:main}.
\end{proof}

\section{Time Complexity}\label{sec:complexity}
We explain how to implement the second step of Algorithm~\ref{alg1}
and estimate its time complexity, which establishes Theorem~\ref{thm:alg}.
For this purpose, we introduce the notion of a matroid-kernel, which is a matroid generalization of 
a stable matching proposed by Fleiner \cite{Fleiner01, Fleiner03}.
Note that it is defined not only for base-orderable matroids but for general matroids.

\subsection{Matroid-kernels}\label{sec:matroid-kernel}
A triple $\M=(E,\I, \succ)$ is called an {\em ordered matroid} if $(E, \I)$ is a matroid and 
$\succ$ is a strict linear order on $E$.
For an ordered matroid $\M=(E,\I, \succ)$ and an independent set $S\in \I$, 
an element $e\in E\setminus S$ is said to be {\em dominated} by $S$ in $\M$ if 
$S+e\not\in \I$ and there is no element $e'\in S$ such that $e\succ e'$ and $S+e-e'\in \I$.

Let $\M_1=(E,\I_1, \succ_1)$ and $\M_2=(E, \I_2, \succ_2)$ be two ordered matroids on the same ground set $E$.
Then, a set $S\subseteq E$ is called an {\em $\M_1 \M_2$-kernel} if $S\in \I_1\cap \I_2$ and 
any element $e\in E\setminus S$ is dominated by $S$ in $\M_1$ or $\M_2$.

In \cite{Fleiner01}, an algorithm for finding a matroid-kernel has been described using choice functions defined as follows.
For an ordered matroid $\M=(E,\I, \succ)$, give indices of elements in $E$ such that 
$E=\{e^1,e^2,\dots,e^n\}$ and $e^1\succ e^2\succ \cdots \succ e^n$.
Define a function $C_{\M}:2^E\to 2^E$ by letting $C_{\M}$ be the output of 
the following greedy algorithm for every $S\subseteq E$. 
Let $T^{0}:=\emptyset$ and define $T^{\ell}$ for $\ell=1,2,\dots,n$ by 
\begin{equation*}
T^{\ell}\coloneqq
 \begin{cases}
T^{\ell-1}+e^{\ell}& \text{if~~} e^{\ell}\in S \text{~~and~~} T^{\ell-1}+e^{\ell}\in \I,\\
T^{\ell-1}& \text{otherwise};
\label{eq:greedy2}
\end{cases}
\end{equation*}
then, let $\C_{\M}(S)\coloneqq T^{n}$.

Let $C_{\M_1}$, $C_{\M_2}$ be the choice functions defined from
$\M_1=(E,\I_1, \succ_1)$, $\M_2=(E, \I_2, \succ_2)$, respectively.
In \cite[Theorem~2]{Fleiner01}, Fleiner showed that an $\M_1\M_2$-kernel can be found using the following algorithm,
which can be regarded as a generalization of the Gale--Shapley algorithm.
First, set $R\gets \emptyset$. Then, repeat the following three steps:
(1) $S\gets C_{\M_1}(E\setminus R)$,
(2) $T\gets C_{\M_2}(S\cup R)$, and
(3) $R\gets (S\cup R)\setminus T$.
Stop the repetition if $R$ is not changed at (3) and return $T$ at that moment.
In terms of the ordinary Gale--Shapley  algorithm, $R$, $S$, and $T$ correspond to
the sets of contracts that are rejected by firms thus far, proposed by workers, and accepted by firms, respectively.

\begin{theorem}[Fleiner \cite{Fleiner01, Fleiner03}]\label{thm:Fleiner}
For any pair of ordered matroids $\M_1$ and $\M_2$ on the same ground set $E$,
there exists an $\M_1\M_2$-kernel. One can find an $\M_1\M_2$-kernel in $O(|E|\cdot {\rm EO})$ time,
where ${\rm EO}$ is the time required to compute $C_{\M_1}(S)$ and $C_{\M_2}(S)$ for any $S\subseteq E$.  
\end{theorem}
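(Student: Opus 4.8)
The plan is to follow Fleiner's fixed-point analysis of the Gale--Shapley-type procedure described above, working in the lattice of ``rejected'' sets. The first step is to isolate two structural properties of the greedy choice function $C_{\M}$ of an ordered matroid $\M=(E,\I,\succ)$. \emph{Property 1 (domination of rejected elements):} $C_{\M}(S)$ is a maximal independent subset of $S$, and every $e\in S\setminus C_{\M}(S)$ is dominated by $C_{\M}(S)$ in $\M$. This holds because, at the step where the greedy algorithm discards $e$, the element $e$ already lies in the closure of the previously chosen set $C_{\M}(S)\cap\{h:h\succ e\}$; hence $C_{\M}(S)+e\notin\I$, and the unique circuit of $C_{\M}(S)+e$ through $e$ is contained in $\{h:h\succeq e\}$, so no $e'\in C_{\M}(S)$ with $e\succ e'$ can satisfy $C_{\M}(S)+e-e'\in\I$. \emph{Property 2 (monotonicity of rejection):} $S\subseteq S'$ implies $S\setminus C_{\M}(S)\subseteq S'\setminus C_{\M}(S')$, equivalently $C_{\M}(S')\cap S\subseteq C_{\M}(S)$. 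I would prove this by induction along $\succ$, showing that $\mathrm{span}(C_{\M}(S)\cap\{h:h\succeq g\})\subseteq\mathrm{span}(C_{\M}(S')\cap\{h:h\succeq g\})$ for every $g\in E$ (using monotonicity of matroid closure) and then reading off the case $g\in S\setminus C_{\M}(S)$.

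Next I would recast a single pass of the algorithm as the operator $\Phi:2^{E}\to2^{E}$ given by $\Phi(R)=(S_R\cup R)\setminus C_{\M_2}(S_R\cup R)$ with $S_R:=C_{\M_1}(E\setminus R)$; the algorithm iterates $R_0=\emptyset$, $R_{t+1}=\Phi(R_t)$ and returns $C_{\M_2}(S_{R_t}\cup R_t)$ once $R_{t+1}=R_t$. The crux is that $\Phi$ is order-preserving. If $R\subseteq\hat R$ then $E\setminus R\supseteq E\setminus\hat R$, so Property 2 applied to $C_{\M_1}$ gives $S_R\cup R\subseteq S_{\hat R}\cup\hat R$, and Property 2 applied to $C_{\M_2}$ on that inclusion gives $\Phi(R)\subseteq\Phi(\hat R)$. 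Since $R_0=\emptyset\subseteq R_1$, monotonicity yields an increasing chain $R_0\subseteq R_1\subseteq R_2\subseteq\cdots$ in the finite lattice $2^{E}$, which strictly grows until it stabilizes; hence after at most $|E|+1$ iterations the algorithm reaches a fixed point $R^{*}=\Phi(R^{*})$ (equivalently the least fixed point of $\Phi$, by Knaster--Tarski).

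It then remains to verify that the set returned at the fixed point is an $\M_1\M_2$-kernel, which also establishes existence. Put $S^{*}:=C_{\M_1}(E\setminus R^{*})$ and $T^{*}:=C_{\M_2}(S^{*}\cup R^{*})$. From $R^{*}=(S^{*}\cup R^{*})\setminus T^{*}$ and $S^{*}\cap R^{*}=\emptyset$ (as $S^{*}\subseteq E\setminus R^{*}$) one gets $T^{*}=S^{*}$, so $S^{*}=C_{\M_1}(E\setminus R^{*})\in\I_1$ and $S^{*}=C_{\M_2}(S^{*}\cup R^{*})\in\I_2$, i.e.\ $S^{*}\in\I_1\cap\I_2$. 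For $e\in E\setminus S^{*}$: if $e\in R^{*}$ then $e$ is a rejected element of $C_{\M_2}(S^{*}\cup R^{*})$, hence dominated by $S^{*}$ in $\M_2$ by Property 1; if $e\notin R^{*}$ then $e\in(E\setminus R^{*})\setminus C_{\M_1}(E\setminus R^{*})$ is dominated by $S^{*}$ in $\M_1$. Thus every element outside $S^{*}$ is dominated in $\M_1$ or $\M_2$, so $S^{*}$ is an $\M_1\M_2$-kernel. For the running time, each iteration performs one evaluation of $C_{\M_1}$, one of $C_{\M_2}$, and $O(|E|)$ set operations, and there are at most $|E|+1$ iterations, giving $O(|E|\cdot\mathrm{EO})$ in total (using $\mathrm{EO}=\Omega(|E|)$).

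I expect the only real obstacle to be Property 2: showing that the greedy rejection set can only grow when the ground set grows is the one genuinely matroidal ingredient, and it drives both the monotonicity of $\Phi$ and the structure of its fixed point; everything else is lattice-theoretic bookkeeping together with a standard analysis of the greedy algorithm.
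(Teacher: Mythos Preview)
The paper does not give its own proof of this theorem: it is quoted as a result of Fleiner \cite{Fleiner01,Fleiner03}, with only the algorithm described and the statement cited. So there is nothing in the paper to compare against beyond the algorithmic description in Section~\ref{sec:matroid-kernel}.

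That said, your reconstruction is faithful to Fleiner's original argument and is correct. The two structural facts you isolate are exactly the ones the paper alludes to elsewhere: your Property~2 is what Fleiner calls \emph{comonotonicity} (the paper mentions this in the proof of Proposition~\ref{prop:reduction1}), and Property~1 is the standard greedy/closure fact for ordered matroids. Your monotonicity computation for $\Phi$ is right: from $R\subseteq\hat R$ one first gets $S_R\cup R\subseteq S_{\hat R}\cup\hat R$ via Property~2 for $C_{\M_1}$ (applied to the inclusion $E\setminus\hat R\subseteq E\setminus R$), and then $\Phi(R)\subseteq\Phi(\hat R)$ via Property~2 for $C_{\M_2}$. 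The fixed-point verification $T^{*}=S^{*}$ is also correct, since $S^{*}\cap R^{*}=\emptyset$ together with $R^{*}=(S^{*}\cup R^{*})\setminus T^{*}$ and $T^{*}\subseteq S^{*}\cup R^{*}$ forces $T^{*}=S^{*}$; the domination of every $e\in E\setminus S^{*}$ then follows by splitting on whether $e\in R^{*}$. The iteration bound and the resulting $O(|E|\cdot\mathrm{EO})$ running time are as in Fleiner's analysis. In short: your proof is correct and is essentially Fleiner's proof, which the paper defers to rather than reproduces.
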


\subsection{Implementation of Our Algorithm}\label{sec:reduction}
We show that the second step of Algorithm~\ref{alg1} is reduced to a computation of a matroid-kernel.

For an auxiliary instance $I^*$ defined in Section~\ref{sec:Problem},
note that $\set{E^*_w|w\in W}$ is a partition of $E^*$ and let $(E^*, \I^*_W)$ be 
the direct sum of $\{(E^*_w,\I^*_w)\}_{w\in W}$.
Furthermore, let $\succ_W$ be a strict linear order on $E^*$ that is 
consistent with the workers' preferences $\{P^*_w\}_{w\in W}$ in $I^*$.
For example, obtain $\succ_{W}$ by concatenating the lists $P^*_w$ of all workers in an arbitrary order.
Then, $\M_W=(E^*, \I^*_W, \succ_W)$ is an ordered matroid on the contract set $E^*$.
As $\set{E^*_f|f\in F}$ is also a partition of $E^*$,
we can define an ordered matroid 
$\M_F=(E^*,\I^*_F, \succ_F)$ in the same manner from $\{(E^*_f,\I^*_f)\}_{f\in F}$ and 
$\{P^*_f\}_{f\in F}$.

We show that $\M_W\M_F$-kernels are equivalent to stable matchings of $I$.
This has already been shown in several previous works \cite{FK16, Yokoi17}. 
We present  a proof for the completeness.

\begin{lemma}\label{lem:equivalence}
$M^*\subseteq E^*$ is a stable matching of $I^*$ if and only if $M^*$ is an $\M_W\M_F$-kernel.
\end{lemma}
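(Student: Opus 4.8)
The plan is to unwind both sides through the definitions of ``stable matching of $I^*$'' and ``$\M_W\M_F$-kernel'' and observe that they say the same thing, using Observation~\ref{obs:feasible} to match feasibility conditions and a direct translation to match the blocking/domination conditions. First I would note that membership $M^*\in\I^*_W\cap\I^*_F$ is, by the direct-sum construction of $\M_W$ and $\M_F$, exactly the statement that $M^*_a=M^*\cap E^*_a\in\I^*_a$ for every agent $a\in W\cup F$, which by Observation~\ref{obs:feasible} is exactly feasibility of $M^*$ for all agents in $I^*$, i.e.\ $M^*$ is a matching of $I^*$. So the feasibility parts of the two notions coincide, and it remains to show that, for a matching $M^*$, the no-blocking-contract condition is equivalent to the condition that every $e^*\in E^*\setminus M^*$ is dominated by $M^*$ in $\M_W$ or in $\M_F$.

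For the core equivalence I would argue contrapositively on both halves simultaneously, via the following dictionary: a contract $e^*\in E^*\setminus M^*$ is \emph{free for $\w(e^*)$ in $M^*$} (in the {\sc smti-lc} sense) if and only if $e^*$ is \emph{not dominated by $M^*$ in $\M_W$}, and symmetrically for $\f(e^*)$ and $\M_F$. Indeed, by the paragraph preceding Lemma~\ref{lem:approx} applied to the instance $I^*$ (direct sums of the $(E^*_a,\I^*_a)$), the condition ``$M^*_w+e^*\in\I^*_w$ or $\exists\,e'\in M^*_w$ with $e^*\succ^*_w e'$ and $M^*_w+e^*-e'\in\I^*_w$'' is literally the negation of ``$M^*+e^*\notin\I^*_W$ and there is no $e'\in M^*$ with $e^*\succ_W e'$ and $M^*+e^*-e'\in\I^*_W$'', which is the definition of domination in $\M_W$; here one uses that $\succ_W$ restricted to $E^*_w$ agrees with $\succ^*_w$ and that elements of $E^*\setminus E^*_w$ are irrelevant to independence in the direct sum. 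Granting this dictionary, ``$e^*$ blocks $M^*$'' (free for both $\w(e^*)$ and $\f(e^*)$) is exactly ``$e^*$ is dominated by $M^*$ in neither $\M_W$ nor $\M_F$'', so ``no contract blocks $M^*$'' is exactly ``every $e^*\in E^*\setminus M^*$ is dominated in $\M_W$ or $\M_F$.'' Combining with the feasibility equivalence yields the lemma.

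The only slightly delicate point — and where I would be most careful — is the interaction between the per-agent matroids and the direct-sum matroids: one must check that $M^*+e^*\notin\I^*_W$ really is equivalent to $M^*_w+e^*\notin\I^*_w$ (where $w=\w(e^*)$), and that the exchange element $e'$ witnessing non-domination can always be taken inside $E^*_w$ (since removing an element of another block never restores independence in that block). This is exactly the content of the observations stated just before Lemma~\ref{lem:approx}, so I would simply cite that paragraph rather than reprove it. Everything else is bookkeeping: the linear order $\succ_W$ may interleave different workers' lists, but since domination in a direct sum only ever compares $e^*$ against elements of its own block, the interleaving is harmless. I expect no genuine obstacle here; the lemma is a translation statement, and the work is entirely in lining up the definitions correctly and invoking Observation~\ref{obs:feasible} and the direct-sum remarks.
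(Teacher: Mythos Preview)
Your proposal is correct and follows essentially the same route as the paper's proof: match feasibility via the direct-sum definitions, then establish the dictionary ``free for $w$'' $\Leftrightarrow$ ``not dominated in $\M_W$'' (and symmetrically for firms) using the direct-sum exchange facts and the consistency of $\succ_W$ with $\succ^*_w$. One small note: the step ``$M^*_a\in\I^*_a$ for all $a$ is feasibility of $M^*$ in $I^*$'' is immediate from the definition of $\I^*_a$, not from Observation~\ref{obs:feasible} (which relates $\I^*_a$ to $\I_a$ and is not needed for this lemma).
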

\begin{proof}
By the definitions of $(E^*, \I^*_W)$ and $(E^*, \I^*_f)$, 
a set $M^*\subseteq E^*$ is feasible for all agents in $I^*$ if and only if $M^*\in \I^*_W\cap \I^*_F$.
Recall that a contract $e\in E^*\setminus M^*$ is free for the associated worker $w\coloneqq \w(e)$ 
if $M^*_w+e\in \I^*_w$ or there exists $e'\in M^*_w$ such that 
$e\succ^*_w e'$ and $M^*_w+e-e'\in \I^*_w$.
By the definition of $\I^*_W$,
we have $M^*_w+e\in \I^*_w$ if and only if $M^*+e\in \I^*_W$.
In addition, if $M^*_w+e\not\in \I^*_w$, then $M_w^*+e-e'\in \I^*_w$ holds for $e'\in M_w$ 
if and only if $M^*+e-e'\in \I^*_W$.
Because $\succ_W$ is consistent with $\succ^*_w$, 
these imply that $e$ is free for $w=\w(e)$ in $M^*$ if and only if $e$ is not dominated by $M^*$ in $\M_W$.
Similarly, we can show that $e$ is free for the associated firm $f\coloneqq \f(e)$ in $M^*$
if and only if $e$ is not dominated by $M^*$ in $\M_F$.
Thus, the equivalence holds.
\end{proof}

\begin{lemma}\label{lem:oracle}
For any subset $S^*\subseteq E^*$, we can compute $C_{\M_W}(S^*)$ and $C_{\M_F}(S^*)$ in $O(k^*\cdot |E^*|)$ time,
where $k^*$ is the maximum level of nesting of laminar families $\laminar^*_a~(a\in W\cup F)$.  
\end{lemma}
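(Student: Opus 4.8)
The claim is that for any $S^*\subseteq E^*$, the greedy choice function $C_{\M_W}(S^*)$ (and symmetrically $C_{\M_F}(S^*)$) can be evaluated in $O(k^*\cdot|E^*|)$ time, where $k^*$ is the maximum nesting level of the laminar families $\laminar^*_a$. Since $(E^*,\I^*_W)$ is the direct sum of the laminar matroids $(E^*_w,\I^*_w)$ over $w\in W$, and these ground sets partition $E^*$, the greedy algorithm defining $C_{\M_W}$ never has an interaction across different workers: whether $T^{\ell-1}+e^\ell\in\I^*_W$ depends only on the contracts already selected that lie in $E^*_{\w(e^\ell)}$. So I would first reduce to a single laminar matroid: it suffices to show that, for a laminar matroid $(E_a^*,\I_a^*)$ with nesting level at most $k^*$, one pass of greedy insertion of the $|E^*_a|$ elements (in the order given by $\succ_W$ restricted to $E^*_a$) can be done in $O(k^*\cdot|E^*_a|)$ total time; summing over $a$ and using $\sum_a|E^*_a|=|E^*|$ gives the bound.

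**Key steps.** First I would describe the data structure. For a fixed agent $a$, store the laminar family $\laminar^*_a$ as a forest: each set $L^*\in\laminar^*_a$ is a node, with parent the smallest member of $\laminar^*_a$ properly containing it; each element $e\in E^*_a$ points to the minimal member of $\laminar^*_a$ containing it (adding the trivial singleton sets or the whole set $E^*_a$ if convenient so that every element is covered — this only affects the nesting level by a constant). Maintain at each node $L^*$ a counter $c(L^*)=|T\cap L^*|$ for the current greedy set $T$. The independence test $T+e\in\I^*_a$ is then: walk from $e$'s minimal set up to the root, and check $c(L^*)<q^*_a(L^*)$ at every ancestor $L^*$; this chain has length at most $k^*$. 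If the test passes, walk up the same chain again incrementing each $c(L^*)$ by one; if it fails, do nothing. Each of the $|E^*_a|$ elements thus costs $O(k^*)$, and the preprocessing to build the forest and the element-to-leaf pointers is $O(k^*\cdot|E^*_a|)$ as well (or $O(|\laminar^*_a|+|E^*_a|)$, which is no worse). This is exactly the greedy defining $C_{\M_W}$ restricted to $E^*_a$; concatenating over all $a\in W$ in the order fixed by $\succ_W$ reproduces $C_{\M_W}(S^*)$ on the whole of $E^*$. The firm side $C_{\M_F}(S^*)$ is handled identically with $\{(E^*_f,\I^*_f)\}_{f\in F}$.

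**Main obstacle.** The argument is essentially bookkeeping, so the only real point to be careful about is justifying that it is correct to run the greedy independently on each block $E^*_a$ and simply concatenate: this needs the observation that in the direct-sum matroid $\I^*_W$ the constraint $T^{\ell-1}+e^\ell\in\I^*_W$ is equivalent to $(T^{\ell-1}\cap E^*_{\w(e^\ell)})+e^\ell\in\I^*_{\w(e^\ell)}$, which is immediate from the definition of direct sum but should be stated. The second mild subtlety is relating $k^*$, the nesting level of $\laminar^*_a$, to the cost of one up-the-chain walk: by construction $\laminar^*_a$ consists of the sets $L^*$ ($L\in\laminar_a$) together with the triples $\{x_i,y_i,z_i\}$, and the longest chain has length at most $k+1$ where $k$ is the nesting level of $\laminar_a$, so $k^*\le k+1$; in any case the walk length is bounded by $k^*$ by definition, which is all that is needed here. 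I expect no genuine difficulty beyond writing these two observations down cleanly.
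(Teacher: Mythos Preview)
Your proposal is correct and follows essentially the same approach as the paper: represent the laminar constraints as a forest, maintain counters $|T\cap L^*|$ at each node, and for each element walk the chain of ancestors (length at most $k^*$) to test independence and update counters. The only cosmetic difference is that the paper merges $\{\laminar^*_w\}_{w\in W}$ into a single laminar family on $E^*$ (the ground sets being disjoint), whereas you keep the per-agent forests separate and invoke the direct-sum property to justify running the greedy blockwise; both yield the same $O(k^*\cdot|E^*|)$ bound.
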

\begin{proof}
We only explain the computation of $C_{\M_W}(S^*)$ because that of $C_{\M_F}(S^*)$ is similar.

Let $\laminar$ be the union of $\{\laminar^*_w\}_{w\in W}$
and define $q:\laminar\to \Zp$ by setting $q(L)=q^*_w(L)$ for each $w\in W$ and $L\in \laminar^*_w$.
Then, $\laminar$ is a laminar family on $E^*$ and the matroid $(E^*, \I^*_W)$ is defined by $\laminar$ and $q$. 
The maximum level of nesting of $\laminar$ is again $k^*$. 

Referring to \cite{BFIM10}, we represent $\laminar$ by 
a forest $G$ whose node set is $\set{v_{L}|L\in \laminar}$.
Node $v_L$ is the parent of $v_{L'}$ in $G$
if $L\subseteq L'$ and there is no $L''\in \laminar$ such that $L\subsetneq L''\subsetneq L'$.
Note that $\laminar$ contains the set $\{x_i,y_i,z_i\}$ for every $e_i\in E$, 
which is inclusion-wise minimal in $\laminar$.
Therefore, the node $v_i\coloneqq v_{\{x_i,y_i,z_i\}}$ is a leaf for any $e_i\in E$, and any leaf has this form. 

We compute the sequence $T^{0}, T^{1},\dots, T^{|E^*|}$ of sets in the definition of $C_{\M_W}(S^*)$ as follows. 
For each $v_L$, we store a pointer to its parent, 
the value of $q(L)$, and 
the value of $|T^{\ell-1}\cap L|$.
For each $e^{\ell}\in E^*$, 
we have $T^{\ell-1}+e^{\ell}\in \I^*_W$ if and only if there is no ancestor node $v_{L}$ of 
$v_i$ with $q(L)=|T^{\ell-1}\cap L|$, where $v_i$ is the leaf with $e^{\ell}\in \{x_i,y_i,z_i\}$.
Then, we can check whether $T^{\ell-1}+e^{\ell}\in \I^*_W$ in $O(k^*)$ time by following the path of 
the parent pointers from $v_i$.
When $T^{\ell}=T^{\ell-1}+e^{\ell}$, 
we update the stored values $|T^{\ell-1}\cap L|$ to $|T^{\ell}\cap L|$ for each $L\in \laminar$ with $e^{\ell}\in L$.
This is also performed in $O(k^*)$ time by following the path of the parent pointers.
\end{proof}

\begin{proof}[Proof of Theorem~\ref{thm:alg}]
As we have Theorem~\ref{thm:main}, what is left is to show the time complexity.
The set $E^*$ of contracts in $I^*$ satisfies $|E^*|=3|E|$.
The maximum level of nesting of laminar families $\laminar^*_a$ in $I^*$ is $k+1$.
By Theorem~\ref{thm:Fleiner} and Lemmas~\ref{lem:equivalence} and \ref{lem:oracle},
then the second step of Algorithm~\ref{alg1} is computed in $O((k+1)\cdot |E^*|^2)=O(k\cdot|E|^2)$ time.
Since the first and third steps can be performed in $O(k\cdot|E|^2)$ time, Algorithm~\ref{alg1} runs in $O(k\cdot|E|^2)$ time.
\end{proof}

\begin{remark}
Our analysis depends on the fact that the feasible set family defined by laminar constraints 
forms the independent set family of a base-orderable matroid. 
Actually, we can extend Theorem~\ref{thm:main} to a setting where
the family of feasible sets of each agent $a\in W\cup F$ is represented by
the independent set family $\I_a$ of an arbitrary base-orderable matroid.
To construct $I^*$ in this case, we define $E^*$ and $\{P^*_a\}_{a\in W\cup F}$ as in Section~\ref{sec:alg}
and define the feasible set family $\I^*_a$ by $\I^*_a=\set{S^*\subseteq E_a^*| |\{x_i,y_i,z_i\}\cap S^*|\leq 1~\text{ for any }e_i\in E_a \text{ and }\pi(S^*)\in \I_a}$.
We can easily show that $(E_a^*, \I^*_a)$ is also a base-orderable matroid and
apply the arguments in Sections~\ref{sec:correctness} and \ref{sec:complexity}, except Lemma~\ref{lem:oracle}.
Given a membership oracle for each $\I_a$ available, 
Algorithm~\ref{alg1} runs in $O(\tau\cdot |E|^2)$ time in this case, where $\tau$ is the time for an oracle call.
\end{remark}

\section{Strategy-Proof Approximation Mechanisms}\label{sec:SP}
In this section, we investigate approximation ratios for {\sc max-smti-lc} attained by strategy-proof mechanisms.
First, note that our setting {\sc smti-lc} is a generalization of the stable marriage model of Gale and Shapley \cite{GS62}; hence, Roth’s impossibility theorem \cite{Roth86} implies that there is no mechanism that returns
a stable matching and is strategy-proof for agents on both sides.
As with many existing works on strategy-proofness in two-sided matching models, 
we consider one-sided strategy-proofness in the setting of many-to-one matching.
Many-to-one matching models have various applications such as assignment of residents to hospitals \cite{Roth84b, RP99}
and students to high schools \cite{APR05, APRS05, APR09}.
In such applications, strategy-proofness for residents or students is a desirable property preventing their strategic behavior.

\subsection{Model and Definitions}
We define a setting of {\sc smti-olc}, which is a many-to-one variant of {\sc smti-lc}.
(Here, {\sc olc} stands for ``one-sided laminar constraints'').
In {\sc smti-olc}, each worker is assigned at most one contract and hence has no laminar constraints.
An instance of {\sc smti-olc}  
is described as $I=(W,F,E,\{P_w\}_{w\in W}, \{\laminar_f,q_f,P_f\}_{f\in F})$.
To consider strategies of workers, we slightly change the assumption on each $P_w$.
In Section~\ref{sec:Problem}, it is assumed that $P_w$ contains all contracts in $E_w$.
Here, we allow each worker to submit a preference list $P_w$ that is defined on any subset of $E_w$
and regard contracts not appearing in $P_w$ as unacceptable for $w$.
Let $E^{\circ}$ be the set of acceptable contracts,
that is, $E^{\circ}=\set{e\in E|\text{$e$ appears in $P_w$, where $w=\w(e)$}}$.

A set $M\subseteq E$ is called a {\em matching} if $M\subseteq E^{\circ}$, $|M_w|\leq 1$ for every worker $w\in W$, 
and $M$ is feasible for every firm $f\in F$.
For a matching $M$, a contract $e\in E\setminus M$ {\em blocks} $M$ if it is free for both $\w(e)$ and $\f(e)$,
where we say that $e$ is {\em free} for the associated worker $w\coloneqq \w(e)$ if $e\in E^{\circ}$ and either
$w$ is assigned no contract in $M$ or prefers $e$ to the contract assigned in $M$.
A matching $M$ is {\em stable} if there is no contract that blocks $M$.
The auxiliary instance $I^*=(W,F,E^*,\{P^*_w\}_{w\in W}, \{\laminar^*_f,q^*_f,P^*_f\}_{f\in F})$ of $I$
is defined similarly as in Section~\ref{sec:alg}.

We remark that {\sc smti-olc} is indeed as a special case of {\sc smti-lc}, 
although the assumption on workers' preference lists is slightly different from that of {\sc smti-lc}.
From an {\sc smti-olc} instance $I$,
define $I^{\circ}=(W,F,E^{\circ},\{\laminar^{\circ}_a,q^{\circ}_a,P^{\circ}_a\}_{a\in W\cup F})$ as follows. 
For each worker $w\in W$, set $\laminar^{\circ}_w=\{E^{\circ}_w\}$, $q^{\circ}_w(E^{\circ}_w)=1$, and $P^{\circ}_w=P_w$.
For each firm $f\in F$, set $\laminar^{\circ}_f=\set{L\cap E^{\circ}|L\in \laminar_f}$, 
$q^{\circ}_f(L\cap E^{\circ})=q_f(L)$ for each $L\in \laminar_f$, 
and let $P^{\circ}_f$ be the restriction of $P_f$ on $E^{\circ}_f$ 
(i.e., delete the elements in $E_f\setminus E^{\circ}_f$ from $P_f$).
Then, $I^{\circ}$ is an instance of {\sc smti-lc} in Section~\ref{sec:formulation}.
By definition, we can see that a subset $M\subseteq E$ is a stable matching of $I$ 
if and only if it is a stable matching of $I^{\circ}$.
Therefore, we can apply Algorithm~\ref{alg1} to {\sc smti-olc} instances.

For subsets $M,N\subseteq E$, a worker $w\in W$, and a preference list $P_w$,
we say that $w$ {\em weakly prefers} $M$ to $N$ with respect to $P_w$ if
either (i) $w$ is assigned a contract appearing in $P_w$ only in $M$ or 
(ii) $w$ is assigned a contract appearing in $P_w$ in 
both $M$ and $N$ and does not strictly prefer the one assigned in $N$ with respect to $P_w$.
A stable matching $M$ of an {\sc smti-olc} instance $I$ is {\em worker-optimal} 
if, for any other stable matching $N$ of $I$, every worker $w$ weakly prefers $M$ to $N$.

A {\em mechanism} is a mapping from {\sc smti-olc} instances to matchings.
Here, we define the {\em worker-strategy-proofness} of a mechanism.
Let $A$ be a mechanism. 
For any instance $I$ and any worker $w$,
let $I'$ be an instance obtained from $I$ by replacing $w$'s list $P_w$ with some other list $P'_w$. 
Let $M$ and $M'$ be the outputs of $A$ for instances $I$ and $I'$, respectively.
We say that $A$ is {\em worker-strategy-proof} if $w$ weakly prefers $M$ to $M'$ with respect to the original list $P_w$ 
regardless of the choices of $I$, $w$, and $P'_{w}$.

\subsection{Approximation Mechanisms}
Before providing our results on {\sc smti-olc},
we introduce some existing results on special cases.

We first present a result on the setting without ties.
As shown in Section~\ref{sec:reduction}, for an {\sc smti-olc} instance in which all agents have strict preferences, 
stable matchings can be represented as matroid-kernels. 
Therefore, the existing results on matroid-kernel \cite{Fleiner01, Fleiner03} imply that all the stable matchings have the same cardinality and there is a unique worker-optimal stable matching.
The following lemma is a a natural consequence of the results in \cite{KTY18}.
\begin{lemma}\label{lem:SP}
In a restriction of {\sc smti-olc} in which all agents have strict preferences, 
a mechanism that returns the worker-optimal stable matching is worker-strategy-proof.
\end{lemma}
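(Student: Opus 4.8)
The plan is to reduce Lemma~\ref{lem:SP} to a known strategy-proofness result rather than reprove it from scratch. The natural candidate is the strategy-proofness theorem of Hatfield and Milgrom~\cite{HM05} for matching with contracts, which the introduction already flags as the intended tool. First I would observe that in the restriction where all agents have strict preferences, a stable matching of {\sc smti-olc} is exactly a stable outcome in the Hatfield--Milgrom contract model: each firm $f$ has a choice function $C_f$ over subsets of $E_f$ induced by its strict preference list together with its laminar quota constraints, and each worker, being assigned at most one contract, has a trivial unit-demand choice function induced by $P_w$. The worker-optimal stable matching coincides with the outcome of the cumulative-offer (worker-proposing) process, i.e.\ the doctor-optimal stable allocation in their terminology.

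The key step is to check that the firms' choice functions satisfy the \emph{substitutes} condition and the \emph{law of aggregate demand} (also called size monotonicity), since these are the hypotheses under which Hatfield and Milgrom prove that the doctor-optimal stable outcome is (group) strategy-proof for doctors. Substitutability holds because the feasible sets $\I_f$ form a matroid (a laminar matroid, by Example~\ref{prop:laminar-matroid}), and greedy selection with respect to a strict linear order on a matroid yields a choice function satisfying the substitutes condition --- this is the standard fact underlying Fleiner's framework~\cite{Fleiner01, Fleiner03}, and indeed $C_f = C_{\M_f}$ restricted to $E_f$ in the notation of Section~\ref{sec:complexity}. The law of aggregate demand holds because, again by the matroid structure, $|C_f(S)|$ is monotone nondecreasing in $S$: enlarging the offered set can only weakly increase the size of a greedily chosen independent set (this is immediate from the exchange axiom (I3), or from the fact that $|C_f(S)|$ is the rank of $S$ in the matroid $(E_f,\I_f)$). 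Once both properties are verified, I would invoke \cite[strategy-proofness theorem]{HM05} to conclude that the worker-proposing (hence worker-optimal) stable outcome gives every worker a weakly preferred contract than any outcome obtainable by misreporting $P_w$, which is precisely the worker-strategy-proofness asserted in the lemma.

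The one wrinkle that needs care is the bookkeeping between the {\sc smti-olc} formulation used here and the abstract contract model of \cite{HM05}: I would note that allowing workers to declare some contracts unacceptable (as in the setup of Section~\ref{sec:SP}) is already built into the Hatfield--Milgrom model, where each doctor's preference is a strict list over an acceptable subset augmented by the null contract; and the notion of ``weakly prefers'' defined just before Theorem~\ref{thm:SP} matches their notion of a doctor comparing outcomes. I would also spell out that since each worker is associated with a disjoint block $E_w$ of contracts and is assigned at most one, the ``doctors'' are the workers, the ``hospitals'' are the firms, and there is no ambiguity in the two-sided structure. I expect this translation to be the main obstacle, not because it is deep but because it requires stating the correspondence precisely enough that the cited theorem applies verbatim; the matroid-theoretic verification of substitutes and the law of aggregate demand is routine given Example~\ref{prop:laminar-matroid} and the greedy characterization of $C_{\M_f}$.
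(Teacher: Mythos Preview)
Your proposal is correct and follows essentially the same route as the paper: reduce to the Hatfield--Milgrom model by taking each firm's choice function to be the matroid-greedy map $C_{\M_f}$ on $(E_f,\I_f,\succ_f)$, verify substitutability via Fleiner and the law of aggregate demand via rank monotonicity, check that HM-stability coincides with {\sc smti-olc} stability, and then invoke the HM strategy-proofness theorem. The paper packages these steps as Propositions~\ref{prop:reduction1} and~\ref{prop:reduction2} combined with Theorem~\ref{thm:HM}, exactly as you outline.
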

For the completeness, Appendix~\ref{app:HM} provides the proof of Lemma~\ref{lem:SP},   
which uses the fact that {\sc smti-olc} can be reduced to the model of Hatfield and Milgrom \cite{HM05} if there are no ties.

Next, we introduce the results of Hamada et al.~\cite{HMY19} on {\sc max-smti},
which is a special case of {\sc max-smti-olc} in which every agent is assigned at most one contract.
\begin{theorem}[Hamada et al.~\protect{\cite[Theorem 2]{HMY19}}]\label{thm:HMY1}
For {\sc max-smti}, there is a worker-strategy-proof mechanism that returns a $2$-approximate solution.
On the other hand, for any $\epsilon>0$, there is no worker-strategy-proof mechanism that returns a $(2-\epsilon)$-approximate solution. 
\end{theorem}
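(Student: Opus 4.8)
The statement has two independent halves. For the positive direction, the plan is to analyze the mechanism that first breaks all ties by a fixed, instance-independent priority order and then runs the worker-proposing Gale--Shapley algorithm on the resulting strict instance $\bar I$. Since a stable matching of $\bar I$ is stable in $I$ and has size at least $\frac12|M_{\rm OPT}|$ (this is the content of Proposition~\ref{prop:2-approx} in the constraint-free case), the output is a stable $2$-approximation. Worker-strategy-proofness follows by precomposing with the tie-breaking rule: if $w$ could gain by reporting $P'_w$ instead of her true list $P_w$, the only data that changes is $\bar P_w$ versus $\bar P'_w$, so by the classical Dubins--Freedman/Roth strategy-proofness of worker-proposing Gale--Shapley on strict instances, $w$ weakly prefers her match $M(w)$ to $M'(w)$ with respect to $\bar P_w$; because $\bar P_w$ is consistent with $P_w$, any strict improvement with respect to $\bar P_w$ is a weak improvement with respect to $P_w$, and ``unmatched'' sits at the bottom of both lists, so $w$ weakly prefers $M$ to $M'$ with respect to $P_w$ as well. (The tie-breaking of firms' lists plays no role here.)

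For the impossibility direction, the plan is to exhibit, for every $\epsilon>0$, an instance $I_n$ (ties only on the firm side, incomplete lists, $n$ large) on which every worker-strategy-proof mechanism returns a stable matching of size at most $(\tfrac12+o(1))|M_{\rm OPT}(I_n)|$, hence of approximation ratio exceeding $2-\epsilon$ once $n$ is large. The building block is a four-agent gadget: a firm $f$ indifferent between a dedicated worker $u$ (whose only acceptable firm is $f$) and a mobile worker $v$ with list $f\succ f'$, together with a firm $f'$ acceptable only to $v$. This gadget admits a stable matching of size $2$, namely $\{uf,vf'\}$, and a stable matching of size $1$, namely $\{vf\}$, which leaves $f'$ unmatched. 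I would link $n$ copies of the gadget so that the second choice of the mobile worker in one copy is a tied firm of the next copy, and then perform a sequence of truncations of the mobile workers' lists, invoking worker-strategy-proofness at each step to propagate the constraint ``$v_i$ is matched to $f_i$'' from one copy to the next; since stability then forces $f'_i$ to be unmatched, a constant fraction of the copies contribute only one matched edge instead of two, and a counting argument gives approximation ratio tending to $2$.

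The main obstacle is this last step: arranging the linked gadgets so that the chain of strategic truncations actually forces the suboptimal configuration on every worker-strategy-proof mechanism. The difficulty is that an \emph{unmatched} worker imposes almost no constraint under strategy-proofness (``unmatched'' is weakly dominated by everything unacceptable and by nothing acceptable), so the construction must be designed so that each truncation pushes the mobile worker into a matched-but-worse state rather than into being unmatched; this requires careful nesting of the gadgets and careful tracking of which contracts remain acceptable after each truncation, and it is where the bulk of the proof lies. By contrast, computing $|M_{\rm OPT}(I_n)|$ and verifying the stability of the claimed matchings of each $I_n$ is a routine, if lengthy, case analysis at the level of a single gadget.
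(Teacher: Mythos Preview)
This theorem is quoted from \cite{HMY19} and the paper does not prove it; the paper only proves the generalization of the \emph{positive} half to {\sc smti-olc} (Theorem~\ref{thm:2-approx}). Your argument for the positive half is exactly that proof specialized to {\sc smti}: fix an index-based tie-breaking rule, return the worker-optimal stable matching of the resulting strict instance, invoke Proposition~\ref{prop:2-approx} for the ratio, and use Dubins--Freedman/Roth (the paper's Lemma~\ref{lem:SP}) together with the consistency of $\bar P_w$ with $P_w$ for strategy-proofness. That half is fine.

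For the impossibility half the paper offers nothing to compare against, and your proposal is only a plan with a gap you yourself flag. Concretely, already on a single copy of your four-agent gadget the argument does not bite: when $v$ truncates her list to $\{f\}$, both $\{uf\}$ and $\{vf\}$ are stable, so a worker-strategy-proof mechanism is free to output $\{uf\}$ there and $\{uf,vf'\}$ on the original instance, achieving ratio~$1$. Strategy-proofness only yields a useful implication when the deviating worker is \emph{matched} to a specific partner after truncation, and nothing in your chain forces that; you would need the interlocking of gadgets to make each truncated instance have a \emph{unique} stable partner for the deviating worker (or at least a partner guaranteed by the mechanism), and that structural design is precisely the missing content. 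Absent it, the inductive propagation you describe does not start.
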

\begin{theorem}[Hamada et al.~\protect{\cite[Theorem 4]{HMY19}}]\label{thm:HMY2}
For a restriction of {\sc max-smti} in which ties appear in only workers' preference lists, 
there is a worker-strategy-proof mechanism that returns a $\frac{3}{2}$-approximate solution.
On the other hand, for any $\epsilon>0$, there is no worker-strategy-proof mechanism that returns a $(\frac{3}{2}-\epsilon)$-approximate solution. 
\end{theorem}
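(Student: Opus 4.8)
The plan is to treat the two halves of the statement separately, since they are of entirely different character: the positive half is essentially already established by Theorem~\ref{thm:SP}, whereas the negative (impossibility) half requires a fresh combinatorial construction.

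For the positive half, I would first observe that \textsc{max-smti} with ties confined to the workers' side is exactly the special case of the setting of Theorem~\ref{thm:SP} in which every firm $f$ has the trivial laminar family $\laminar_f=\{E_f\}$ with unit quota $q_f(E_f)=1$ and a strict preference list. In this specialization a firm is matched to at most one contract, so a stable matching of the \textsc{smti-olc} instance is precisely a stable matching in the \textsc{max-smti} sense, and incomplete workers' lists are already permitted. Hence the mechanism of Theorem~\ref{thm:SP}, applied to this instance, is worker-strategy-proof, runs in polynomial time, and returns a matching $M$ with $|M|\ge\frac{2}{3}|M_{\rm OPT}|$, which is the claimed $\frac{3}{2}$-approximation. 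The positive half is therefore an immediate corollary and needs no further work.

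For the impossibility half I would argue that the worst-case ratio of any worker-strategy-proof mechanism is bounded below by $\frac{3}{2}$, which rules out $\frac{3}{2}-\epsilon$ for every $\epsilon>0$ at once. Concretely, I would exhibit a fixed instance $I$ (ties only on the workers' side) that admits a stable matching of size $3$, and then show that every worker-strategy-proof mechanism must return a stable matching of size at most $2$ on $I$. Granting this, a mechanism with ratio $\frac{3}{2}-\epsilon$ would need output of size strictly greater than $2$, hence size $3$, on $I$ — a contradiction. Thus a single well-chosen gadget settles the whole family of values of $\epsilon$ simultaneously.

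The heart of the argument, and the step I expect to be the main obstacle, is deriving the forced bound ``output size $\le 2$ on $I$'' from worker-strategy-proofness alone. Since strategy-proofness is a relational property linking a mechanism's outputs across instances that differ in a single worker's reported list, I would fix a worker $w$ whose list in $I$ contains one tie between two firms, and consider the two tie-broken instances $I^{+},I^{-}$ obtained by ordering $w$'s two firms in each way while leaving all other reports unchanged. Comparing $I$ with $I^{+}$ and with $I^{-}$ through worker-strategy-proofness of $w$ pins down $w$'s partner (up to indifference) in the mechanism's output on $I$; combining this with the firms' strict preferences and stability then propagates to the partners of the remaining workers. The gadget must be designed so that, whichever firm the mechanism assigns to $w$, this propagation blocks the third contract from entering any stable matching of the forced type, leaving size at most $2$, while an optimal (non-strategy-proof) stable matching of size $3$ still exists. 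Verifying that the manipulation constraints genuinely bind — that no consistent worker-strategy-proof mechanism can route $w$ so as to reach size $3$ on $I$ — is the delicate part; the remaining bookkeeping (enumerating the few stable matchings of $I$, $I^{+}$, $I^{-}$ and checking the deviation incentives) is routine once the gadget is fixed.
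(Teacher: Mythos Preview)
The paper does not prove this statement at all; Theorem~\ref{thm:HMY2} is quoted verbatim as a result of Hamada, Miyazaki, and Yanagisawa~\cite{HMY19}, with no proof supplied here. In the paper's narrative the direction is the reverse of what you describe: Theorem~\ref{thm:SP} is presented as a \emph{generalization} of the positive half of this cited result to {\sc smti-olc}, not as a device for rederiving it. So there is no ``paper's own proof'' to compare against.

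That said, your positive half is fine as stated. Once Theorem~\ref{thm:SP} is in hand, specializing each firm to $\laminar_f=\{E_f\}$, $q_f(E_f)=1$ and a strict list recovers exactly {\sc max-smti} with ties only on the workers' side, and the $\frac{3}{2}$-approximation with worker-strategy-proofness follows immediately. This matches how the paper itself frames the relationship between the two results.

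Your negative half, however, is only a plan and contains a genuine gap. You promise to ``exhibit a fixed instance $I$'' admitting a stable matching of size~$3$ on which every worker-strategy-proof mechanism outputs size at most~$2$, and you describe the shape of the argument (compare $I$ with two tie-broken variants $I^{+},I^{-}$ and propagate via stability), but you never actually construct the instance or carry out the case analysis. That construction and verification is the entire content of the impossibility result; without a concrete gadget and the enumeration showing that no consistent assignment of outputs to $I,I^{+},I^{-}$ achieves size~$3$ on $I$, nothing has been proved. In~\cite{HMY19} the lower bound is established by exhibiting such an instance explicitly and checking the deviation constraints case by case; your proposal stops short of supplying one.
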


The first statement of Theorem~\ref{thm:HMY1} is attained by a naive mechanism that first breaks ties 
in an increasing order of the indices and then finds the worker-optimal stable matching of the resultant instance.
This method naturally extends to the setting of {\sc smti-olc} and yields the following theorem. 
See Appendix\ref{app:2-approx} for the proof.
\begin{theorem}\label{thm:2-approx}
For {\sc smti-olc}, 
there is a worker-strategy-proof mechanism that 
returns a stable matching $M$ with $|M|\geq \frac{1}{2}|M_{\rm OPT}|$
in $O(k\cdot|E|^2)$ time, where $M_{\rm OPT}$ is a maximum-cardinality stable matching 
and $k$ is the maximum level of nesting of $\laminar_f~(f\in F)$.
\end{theorem}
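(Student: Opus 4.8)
The plan is to combine the arbitrary tie-breaking idea of Proposition~\ref{prop:2-approx} with the strategy-proofness result of Lemma~\ref{lem:SP}, but with one twist: breaking the workers' ties arbitrarily in a way that does not depend on the workers' reports, so that no worker can gain by misreporting. Concretely, I would fix, once and for all, a strict total order $\prec$ on the whole contract set $E$ (say the index order $e_1\prec e_2\prec\cdots\prec e_n$), independent of any instance. Given an {\sc smti-olc} instance $I=(W,F,E,\{P_w\}_{w\in W},\{\laminar_f,q_f,P_f\}_{f\in F})$, the mechanism breaks every tie in every $P_w$ according to $\prec$ (and, for definiteness, also breaks any ties in the $P_f$ according to $\prec$, though in the stated setting firms may already be strict), producing an instance $I'$ with strict preferences for all agents. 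It then outputs the worker-optimal stable matching $M$ of $I'$.

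First I would verify \emph{correctness}: by the same argument as in the proof of Proposition~\ref{prop:2-approx}, any stable matching of the tie-broken instance $I'$ is a stable matching of $I$ (a contract free for an agent under the refined strict order is free under the original order with ties), and it satisfies $|M|\ge\frac12|M_{\rm OPT}|$ because that bound in Proposition~\ref{prop:2-approx} holds for \emph{every} stable matching of \emph{any} tie-breaking, using only the direct-sum matroid structure $(E,\I_W)$, $(E,\I_F)$ and axiom (I3); here $\I_W$ becomes the direct sum of the free single-contract matroids on each $E^\circ_w$ after restricting to $E^\circ$ as in Observation~\ref{rem:SP}. Next I would verify the \emph{running time}: the tie-breaking is linear, and computing the worker-optimal stable matching of $I'$ reduces, via Lemma~\ref{lem:equivalence} and Lemma~\ref{lem:oracle} applied to $I'$ in place of $I^*$, to computing the $\M_W$-optimal $\M_W\M_F$-kernel, which by Theorem~\ref{thm:Fleiner} (in its optimal-side form, cf.\ Fleiner \cite{Fleiner01}) takes $O(|E|\cdot{\rm EO})=O(k\cdot|E|^2)$ time.

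Then comes \emph{strategy-proofness}, which is the main point to get right. Suppose worker $w$ reports $P'_w$ instead of $P_w$, yielding instances $I$ and $\bar I$, which tie-break to strict instances $I'$ and $\bar I'$. The crucial observation is that the tie-breaking rule $\prec$ is applied pointwise to each submitted list and does not consult the other workers' reports, so $I'$ and $\bar I'$ differ \emph{only} in $w$'s strict list; moreover the strict list derived from $P'_w$ in $\bar I'$ is \emph{some} strict list on a subset of $E_w$, i.e.\ it is a legitimate possible deviation of $w$ inside the strict-preference instance $I'$. By Lemma~\ref{lem:SP}, the worker-optimal mechanism on strict instances is worker-strategy-proof, so $w$ weakly prefers the output $M'$ of $I'$ to the output $\bar M'$ of $\bar I'$ with respect to $w$'s strict list in $I'$. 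Finally I would observe that this strict list refines $P_w$, so weak preference with respect to it implies weak preference with respect to $P_w$: if $w$ gets a contract acceptable under $P_w$ in $\bar M'$, she gets one in $M'$ that is at least as good in the refined order, hence (since the refined order only breaks ties of $P_w$) at least as good in $P_w$. This gives worker-strategy-proofness, completing the proof.

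The step I expect to be the main obstacle is the last one: one must be careful that the tie-breaking is genuinely \emph{independent of $w$'s own report beyond the fact that it refines it} — i.e.\ that the refined list $w$ effectively faces after misreporting is a valid competitor against her refined truthful list — and that ``weakly prefers w.r.t.\ the refined strict order'' cleanly downgrades to ``weakly prefers w.r.t.\ the original order with ties.'' This is exactly where the fixed instance-independent order $\prec$ is essential; a tie-breaking rule that, say, looked at $w$'s list to decide how to break \emph{other} agents' ties would break the reduction. Everything else is a direct citation of Proposition~\ref{prop:2-approx}, Lemma~\ref{lem:SP}, Lemma~\ref{lem:equivalence}, Lemma~\ref{lem:oracle}, and Theorem~\ref{thm:Fleiner}.
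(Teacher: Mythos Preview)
Your proposal is correct and follows essentially the same approach as the paper's proof: fix an instance-independent tie-breaking rule (the paper uses the index order, exactly as you do), return the worker-optimal stable matching of the tie-broken instance, invoke Proposition~\ref{prop:2-approx} for the $\tfrac12$-approximation, Theorem~\ref{thm:Fleiner} with Lemma~\ref{lem:oracle} for the running time, and Lemma~\ref{lem:SP} for strategy-proofness together with the observation that weak preference under the refined strict list implies weak preference under the original list with ties. The ``main obstacle'' you flag is precisely the point the paper handles by noting that $I'$ and $J'$ differ only in $w$'s list.
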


Since {\sc smti-olc} is a generalization of {\sc smti},
the second statement (i.e., the hardness part) of Theorem~\ref{thm:HMY1} immediately extends to {\sc max-smti-olc}.
Therefore, for the general {\sc smti-olc}, there is no worker-strategy-proof mechanism with an approximation ratio better than $2$.

However, in a special case in which firms' lists contain no ties,
Algorithm~\ref{alg1} in Section~\ref{sec:alg} defines a worker-strategy-proof mechanism whose approximation ratio is $\frac{3}{2}$.
That is, we can extend the first statement of Theorem~\ref{thm:HMY2} to the setting of {\sc smti-olc}.
According to the second statement of Theorem~\ref{thm:HMY2}, this is the best approximation ratio attained by a
worker-strategy-proof mechanism.
\begin{theorem}\label{thm:SP}
For a restriction of {\sc smti-olc} in which ties appear in only workers' lists, 
there is a worker-strategy-proof mechanism that 
returns a stable matching $M$ with $|M|\geq \frac{2}{3}|M_{\rm OPT}|$
in $O(k\cdot|E|^2)$ time, where $M_{\rm OPT}$ is a maximum-cardinality stable matching 
and $k$ is the maximum level of nesting of laminar families $\laminar_f~(f\in F)$.
\end{theorem}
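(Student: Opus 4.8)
The plan is to reduce the worker-strategy-proofness of our mechanism to the strategy-proofness result of Hatfield and Milgrom~\cite{HM05} for their ``hospital–doctor'' model with substitutable and law-of-aggregate-demand (LAD) preferences, applied to the auxiliary instance $I^*$. First I would recall that, in the special case under consideration, each worker is assigned at most one contract in $I$ and each firm has a strict preference; by construction each worker's list $P^*_w$ in $I^*$ is strict, and since $w$ receives at most one contract in $I$, the feasibility constraint $\laminar^*_w$ is a single quota of $1$ on $E^*_w$, so $w$'s choice function in $I^*$ is simply ``pick the most preferred acceptable copy.'' On the firm side, the feasible-set family $\I^*_f$ is a laminar matroid, hence the choice function $C_{\M_F}$ induced by the greedy algorithm is a matroidal choice function, which is well known to be substitutable and to satisfy the law of aggregate demand. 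Thus $I^*$ fits into the Hatfield–Milgrom framework, and by Lemma~\ref{lem:equivalence} the $\M_W\M_F$-kernels are exactly the stable matchings of $I^*$; the worker-optimal one corresponds to the doctor-optimal stable allocation, which by~\cite{HM05} is obtained by a cumulative-offer (Gale–Shapley) process that is group-strategy-proof for doctors, and in particular strategy-proof for each individual worker in $I^*$.

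The second part of the argument is to transport this strategy-proofness from $I^*$ back to $I$. The key point is that a worker's manipulation in $I$ — submitting a false list $P'_w$ over a subset of $E_w$ — translates, under the construction of Section~\ref{sec:alg}, into submitting the corresponding false strict list $P'^*_w$ over the copies in $E^*_w$; that is, the map from $P_w$ to $P^*_w$ is ``strategy-respecting'' in the sense that every list a worker could submit in $I$ yields some admissible list in $I^*$, and moreover a worker's true preference over outcomes of $I$ lifts consistently to a preference over outcomes of $I^*$. Concretely, if $w$ is assigned $e_i$ in $M=\pi(M^*)$ then $\pi^{-1}(e_i)\in\{x_i,y_i,z_i\}$ is assigned in $M^*$, and one checks from the definition of $P^*_w$ that $w$ strictly prefers $e_i$ to $e_j$ under $P_w$ if and only if $w$ strictly prefers every copy of $e_i$ to every copy of $e_j$ under $P^*_w$ (using that all copies of a given contract that appear before all copies of another in $P^*_w$ exactly when the original contracts are strictly ordered — this is where the interleaving $x_{i_1}\cdots x_{i_\ell} y_{i_1}\cdots y_{i_\ell}$ within a tie matters, since copies of tied contracts are shuffled together). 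Hence if manipulating $I$ strictly helped $w$, the same manipulation of $I^*$ would strictly help $w$ there, contradicting strategy-proofness of the doctor-optimal cumulative-offer mechanism on $I^*$.

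I would then assemble the pieces: by the discussion preceding the theorem (via the $\M_1$-optimal kernel computation of Fleiner, specialized through Section~\ref{sec:complexity}), the mechanism computes the worker-optimal stable matching $M^*$ of $I^*$ in $O(k\cdot|E|^2)$ time; by Theorem~\ref{thm:main} (extended to {\sc smti-olc} as noted via Observation~\ref{rem:SP} in Appendix~\ref{app:SP}), $M=\pi(M^*)$ is a stable matching of $I$ with $|M|\ge\frac{2}{3}|M_{\rm OPT}|$; and by the reduction above it is worker-strategy-proof. The level of nesting of $\laminar^*_f$ is $k+1$, which is absorbed into the $O(k\cdot|E|^2)$ bound, and the worker-side constraints contribute only trivial (rank-one) matroids, so no extra cost arises there.

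The main obstacle I anticipate is the faithful lifting of strategies and preferences between $I$ and $I^*$ — i.e., verifying that the Hatfield–Milgrom notion of strategy-proofness for $I^*$ genuinely implies the worker-strategy-proofness of the composite mechanism for $I$ as defined in this section (with the subtlety that in {\sc smti-olc} workers submit lists over subsets and the comparison of outcomes has the two-case ``weakly prefers'' definition). This requires carefully matching the outcome of the cumulative-offer process on $I^*$ with a worker-optimal stable matching (so that ``doctor-optimal'' and ``worker-optimal'' coincide) and checking that $\pi$ does not collapse distinct worker outcomes in a way that could mask a profitable deviation; but these are structural verifications rather than deep difficulties, since $\pi$ is injective on the relevant per-worker assignments (each worker holds at most one copy, hence at most one original contract).
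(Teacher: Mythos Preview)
Your overall plan matches the paper's, but the ``transport'' step contains a real error. You claim that $e_i\succ_w e_j$ in $P_w$ if and only if every copy of $e_i$ precedes every copy of $e_j$ in $P^*_w$. This is false: in the construction of $P^*_w$, all $z$-copies are appended at the end, after all $x$- and $y$-copies. So if $e_i\succ_w e_j$, then $x_i,y_i$ do precede $x_j,y_j$, and $z_i$ precedes $z_j$, but $z_i$ comes \emph{after} $x_j$ and $y_j$. Consequently, knowing only that $w$ weakly prefers her assigned copy in $M^*$ to her assigned copy in $N^*$ with respect to $P^*_w$ does not let you conclude anything about $\pi(M^*)$ versus $\pi(N^*)$ with respect to $P_w$: for instance, $w$ could receive $y_i$ in $M^*$ and $z_j$ in $N^*$ with $e_j\succ_w e_i$, and $y_i\succ^*_w z_j$ would still hold. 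Your parenthetical about the interleaving within a tie addresses only the $x,y$-block and overlooks the appended $z$-tail.

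The paper fills exactly this gap with Lemma~\ref{lem:xy}: when ties appear only on the workers' side (so each firm $f$ has strict $P_f$, and hence $z_i$ and $y_i$ are adjacent in $P^*_f$), the worker-optimal stable matching of $I^*$ contains no $z$-copy. Since the deviation instance $I'$ also has strict firm lists, the same lemma applies to $(I')^*$, so neither $M^*$ nor $N^*$ assigns a $z$-copy to $w$. Restricted to $\{x_i,y_i\}$-copies, the order in $P^*_w$ \emph{is} consistent with $P_w$ (strict preference in $P_w$ implies every $\{x_i,y_i\}$-copy of $e_i$ precedes every $\{x_j,y_j\}$-copy of $e_j$; a tie in $P_w$ interleaves them), and then the Hatfield--Milgrom strategy-proofness on $I^*$ (your Lemma~\ref{lem:SP}) transfers back to $I$ as you intended. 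So your argument is essentially the paper's, but it is incomplete without the ``no $z$-copies'' step, which is where the hypothesis that firms' lists are strict is actually used.
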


We provide a mechanism that meets the requirements in Theorem~\ref{thm:SP}.
Our mechanism is regarded as a possible realization of Algorithm~\ref{alg1}.
In the second step of Algorithm~\ref{alg1}, we should choose the worker-optimal stable matching of 
the auxiliary instance $I^*$.
Our mechanism is described as follows.
\medskip
\begin{enumerate}
\item Given an instance $I$ (in which ties appear in only workers' lists), construct $I^*$.
\item Find the worker-optimal stable matching $M^*$ of $I^*$.
\item Let $M=\pi(M^*)$ and return $M$.
\end{enumerate}
\medskip
In the proof of Theorem~\ref{thm:Fleiner} (Fleiner \cite[p.113]{Fleiner01}), it is shown that one can find
the $\M_1$-optimal $\M_1\M_2$-kernel in $O(|E|\cdot {\rm EO})$ time.
The arguments in Section~\ref{sec:complexity} then imply that one can find the worker-optimal stable matching of $I^*$ in $O(k\cdot|E|^2)$ time. 
As we have Theorem~\ref{thm:alg}, showing the strategy-proofness of the above-mentioned 
mechanism completes the proof of Theorem~\ref{thm:SP}.
To this end, we show the following lemma. 
\begin{lemma}\label{lem:xy}
Let $I$ be an {\sc smti-olc} instance with $E=\set{e_i|i=1,2,\dots,n}$ and let 
$I^*$ be the auxiliary instance.
If ties appear in only workers' lists in $I$,
then the worker-optimal stable matching $M^*$ of  $I^*$ satisfies 
$M^*\cap \set{z_i|i=1,2,\dots,n}=\emptyset$.
\end{lemma}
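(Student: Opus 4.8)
The plan is to argue by contradiction: suppose the worker-optimal stable matching $M^*$ of $I^*$ contains some $z_i$, and derive a contradiction with the worker-optimality of $M^*$ by exhibiting another stable matching of $I^*$ that some worker strictly prefers. The key structural fact to exploit is that, since ties appear only in workers' lists of $I$, each firm $f$'s preference list $P^*_f$ in $I^*$ has a very rigid shape: the construction in Section~\ref{sec:alg} interchanges the roles of $x_i$ and $z_i$ for firms, and when $P_f$ has no ties, the $y$-copies and $z$-copies both appear in the same order as $P_f$, with all $z$-copies preceding all $y$-copies, and all $x$-copies at the very bottom. Concretely, $P^*_f$ looks like $z_{\sigma(1)}\,y_{\sigma(1)}\,z_{\sigma(2)}\,y_{\sigma(2)}\,\cdots\,z_{\sigma(m)}\,y_{\sigma(m)}\,x_{\sigma(1)}\,\cdots\,x_{\sigma(m)}$ where $\sigma$ orders $E_f$ by $P_f$. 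Thus for a firm, $z_i\succ^*_f y_i$ and $z_i \succ^*_f z_j, y_j$ whenever $e_i \succ_f e_j$.

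First I would establish the following replacement claim: if $M^*$ is stable in $I^*$ and $z_i\in M^*$, then $M^\diamond := M^* - z_i + y_i$ is also feasible and stable in $I^*$, and the worker $w=\w(e_i)$ is indifferent (it is assigned the same contract $e_i$, just a different copy), while no other worker is made worse off — in fact $M^\diamond$ and $M^*$ have the same projection $\pi(M^\diamond)=\pi(M^*)$. Feasibility is immediate from Observation~\ref{obs:feasible} since $\pi$ is unchanged and we still have at most one copy of each $e_j$. For stability, I would check that no contract $e\in E^*\setminus M^\diamond$ blocks: the only agents whose assigned set changed are $w$ and $f=\f(e_i)$, and for $f$ the set $M^\diamond_f$ is obtained from $M^*_f$ by swapping $z_i$ for the less-preferred $y_i$. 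A contract free for $f$ w.r.t.\ $M^\diamond$ but not w.r.t.\ $M^*$ would have to be some $x_i/y_i/z_i$ copy that $f$ prefers to $y_i$ but not to $z_i$; by the shape of $P^*_f$ above, the only such copy is $z_i$ itself, and $z_i$ is free for $f$ in $M^\diamond$ but $z_i$'s other endpoint $w$ already holds $y_i$, so $z_i\notin E^*\setminus M^\diamond$ is not available to block as a distinct contract — I need to be slightly careful here, and this is the delicate point: I must verify that $z_i$ does not block $M^\diamond$, which holds because $w$ has $y_i\in M^\diamond_w$ and, $z_i$ being just another copy of $e_i$ with $w$ assigned at most one contract, $M^\diamond_w + z_i$ is infeasible and there is no worse copy to drop, so $z_i$ is not free for $w$.

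The main obstacle is precisely this last verification — ensuring the swap does not open a blocking contract at the firm side — and handling it cleanly requires the explicit description of $P^*_f$ derived from the no-ties-for-firms hypothesis; without that hypothesis the claim fails. Once the replacement claim is in hand, the lemma follows quickly: starting from $M^*$ and repeatedly replacing each $z_i\in M^*$ by $y_i$ yields a stable matching $M^\diamond$ with $M^\diamond\cap\set{z_i\mid i}=\emptyset$ and with every worker indifferent between $M^*$ and $M^\diamond$. But then, if $M^*$ actually contained some $z_i$, I would want to show some worker \emph{strictly} prefers $M^\diamond$; since indifference alone does not contradict worker-optimality, I would instead argue directly that the worker-optimal stable matching cannot contain $z_i$ by a different route: among all stable matchings that every worker weakly prefers, uniqueness of the worker-optimal one (from Fleiner's framework, \cite[Theorem~2]{Fleiner01}, the $\M_1$-optimal kernel is unique) together with the fact that $M^\diamond$ is also weakly preferred by every worker forces $M^*=M^\diamond$, hence $M^*$ contains no $z_i$. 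So the real workhorse is the replacement claim plus the uniqueness of the worker-optimal kernel, and I would present it in that order.
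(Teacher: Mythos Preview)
Your overall strategy---swap $z_i$ for $y_i$ and show the resulting set is still stable---is exactly the paper's proof. The stability check you sketch is also correct and matches the paper's: since $P_f$ is strict, $z_i$ and $y_i$ are adjacent in $P^*_f$, so no contract can become newly free for $f$; and $z_i$ itself is not free for $w$ because $w$ now holds $y_i\succ^*_w z_i$. (A small slip: your verbal description ``all $z$-copies preceding all $y$-copies'' contradicts the explicit interleaved form $z_{\sigma(1)}\,y_{\sigma(1)}\,z_{\sigma(2)}\,y_{\sigma(2)}\cdots$ you wrote down; the explicit form is the correct one and is what your argument actually uses.)

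The genuine gap is your claim that $w$ is ``indifferent'' between $M^*$ and $M^\diamond$. You are reasoning in terms of $P_w$ in $I$, where $w$ receives $e_i$ either way. But worker-optimality of $M^*$ is a statement in $I^*$, with respect to the strict list $P^*_w$, and in $P^*_w$ all $z$-copies sit at the very end of the list; hence $y_i\succ^*_w z_i$ \emph{strictly}. So after a single swap $w$ strictly prefers $M^\diamond$ to $M^*$, and this already contradicts the worker-optimality of $M^*$ in $I^*$. Your detour through uniqueness of the worker-optimal kernel is therefore unnecessary, and its premise (indifference in a strict-preference instance) is false as stated. Once you replace ``indifferent'' by ``strictly prefers,'' the iteration over all $z_i$'s and the uniqueness appeal disappear, and your argument becomes precisely the paper's: one swap, one stability check, immediate contradiction.
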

\begin{proof}
Suppose, to the contrary, that $z_i\in M^*$ for some index $i$.
Then $N\coloneqq M^*-z_i+y_i$ is a matching of $I^*$ and 
$w\coloneqq \w(z_i)=\w(y_i)$ prefers $N$ to $M^*$.
We intend to show that $N$ is stable in $I^*$.
Take any $e\in E^*\setminus N=(E^*\setminus M^*)+z_i-y_i$.
If $e=z_i$, then it does not block $N$ because $y_i\succ^*_w z_i$.
If $e\neq z_i$, then the assignment of $\w(e)$ does not change in $M^*$ and $N$,
and hence $e$ can block $N$ only if $f\coloneqq \f(e)=\f(z_i)$ and $z_i \succ^*_f e \succ^*_f y_i$.
This is impossible because no contract lies between $z_i$ and $y_i$ in $P^*_f$ as the list $P_f$ 
of the firm $f$ is strict.
Thus, $N$ is a stable matching of $I^*$,
which contradicts the worker-optimality of $M^*$.
\end{proof}

\begin{proof}[Proof of Theorem~\ref{thm:SP}]
As we have Theorem~\ref{thm:alg}, what is left is to show that our mechanism is worker-strategy-proof.
Let $I=(W,F,E,\{P_w\}_{w\in W}, \{\laminar_f,q_f,P_f\}_{f\in F})$ be an instance of the setting in the statement 
and let $E=\set{e_i|i=1,2,\dots,n}$.
Furthermore, let $I'$ be obtained from $I$ by replacing $P_w$ with some other list $P'_w$.
Let $M^*$ and $N^*$ be the worker-optimal stable matchings of the auxiliary instances defined from $I$ and $I'$, respectively.
Note that the two auxiliary instances have no ties and they differ only in the preference list of $w$. 
Then, Lemma~\ref{lem:SP} implies that $w$ weakly prefers $M^*$ to $N^*$ with respect to $P^*_w$.
In other words, either (i) $w$ is assigned a contract on $P^*_w$ only in $M^*$, or 
(ii) $w$ is assigned a contract on $P^*_w$ in both $M^*$ and $N^*$ and does not strictly 
prefer the one assigned in $N^*$ w.r.t. $P^*_w$.
By Lemma~\ref{lem:xy}, $w$ is not assigned a contract of type $z_i$ in $M^*$ or $N^*$.
Then, the definition of $P^*_w$ implies that $w$ weakly prefers $\pi(M^*)$ to $\pi(N^*)$ w.r.t. $P_w$.
Thus the mechanism is worker-strategy-proof.
\end{proof}

\section*{Acknowledgments}
The author thanks the anonymous reviewers for their helpful comments.
The author was supported by JSPS KAKENHI Grant Number JP18K18004. 
This work was partially supported by the joint project of Kyoto University and Toyota Motor Corporation, titled ``Advanced Mathematical Science for Mobility Society''.

\bibliography{myrefs_ISAAC2021}%

\appendix
\section{Omitted Proofs}
\subsection{Proof of Lemma~\ref{lem:SP}}\label{app:HM}
We prove Lemma~\ref{lem:SP} in Section~\ref{sec:SP}, which states that, 
if the preference lists of all agents are strict in {\sc smti-olc}, 
then a mechanism that always returns the worker-optimal stable matching is strategy-proof for workers.
This is a natural consequence of the results shown in previous works \cite{Goto16, KTY18}.
We provide a proof for the completeness.

For this purpose, we introduce the model of Hatfield and Milgrom~\cite{HM05},
which we call {\em the HM model}, using our notations and terminologies.
An instance of the HM model is given by $(W,F,E,\{P_w\}_{w\in W}, \{C_f\}_{f\in F})$.
The difference from {\sc smti-olc} is that 
$P_w$ should be strict and each firm has a choice function $C_f:2^{E_f}\to 2^{E_f}$
instead of the triple $\{\laminar_f,q_f,P_f\}$. 
A function $C_f:2^{E_f}\to 2^{E_f}$ is called a {\em choice function} if $C_f(S)\subseteq S$ for any $S\subseteq E_f$. 

A stable matching in the HM model is defined similarly to that in {\sc smti-olc},
where the definitions of feasible sets and free contracts for firms are modified as follows.
We say that $M\subseteq E$ is {\em feasible} for $f\in F$ if $C_f(M_f)=M_f$,
and we say that $e\in M\setminus E$ is {\em free} for $f\coloneqq \f(e)$ if $e\in C_f(M_f+e)$.
Let us call this stability {\em HM stability} to distinguish it from the stability in {\sc smti-olc}.%
\footnote{Hatfield and Milgrom \cite{HM05} defined stability by 
the nonexistence of blocking coalitions rather than blocking pairs. 
Such a definition is identical to ours if the choice functions of firms satisfy substitutability \cite{Goto16, KTY18}.}

Hatfield and Milgrom~\cite{HM05}
showed that the following two conditions for each choice function $C_f:2^{E_f}\to 2^{E_f}$
are essential for strategy-proofness.%
\footnote{To be more precise, Hatfield and Milgrom \cite{HM05} implicitly assumed a condition of 
choice functions called {\em the irrelevance of rejected contracts}. 
Ayg{\"u}n and S{\"o}nmez \cite{AS13} pointed out that this condition is important for the results of \cite{HM05}
and also showed that substitutability and the law of aggregate demand together imply this condition.}
\footnote{In the original model of Hatfield and Milgrom \cite{HM05}, 
it is assumed that a firm's choice function always returns a set 
that does not contain multiple contracts associated with the same worker. However, this assumption is not necessary 
to obtain their results.}
\smallskip
\begin{description}
\item[Substitutability:] $S\subseteq T\subseteq E_f$ implies  $S\setminus C_f(S)\subseteq T\setminus C_f(T)$.
\item[Law of aggregate demand:] $S\subseteq T\subseteq E_f$ implies  $|C_f(S)|\leq |C_f(T)|$.
\end{description}

Hatfield and Milgrom~\cite{HM05}
showed that, if each $C_f$ satisfies substitutability, then 
there exists a unique worker-optimal stable matching.
Furthermore, they provided the following theorem.

\begin{theorem}[Hatfield and Milgrom \cite{HM05}]\label{thm:HM}
In the HM model, if each $C_f$ satisfies substitutability and the law of aggregate demand,
then the mechanism that always returns the worker-optimal HM-stable matching is worker-strategy-proof. 
\end{theorem}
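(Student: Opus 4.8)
The plan is to reconstruct the Hatfield--Milgrom argument through three pillars: a fixed-point characterization of the worker-optimal HM-stable matching, the rural hospitals invariance, and a manipulation-reduction argument. Throughout, substitutability is the structural hypothesis that makes the first two pillars work, while the law of aggregate demand (LAD) is what upgrades them to strategy-proofness. I cannot invoke the statement itself, so everything below must be built from substitutability, LAD, and the definition of HM-stability.

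First I would make the worker-optimal HM-stable matching well-defined and constructive. Using substitutability, the firm rejection maps $S\mapsto S\setminus C_f(S)$ are monotone, so the associated cumulative-offer (generalized Gale--Shapley) operator is isotone on the lattice of offer profiles. Tarski's fixed-point theorem then yields a worker-optimal fixed point, and one checks that fixed points coincide with HM-stable matchings and that the worker-optimal fixed point is weakly preferred by every worker to every HM-stable matching; this is the object returned by the mechanism. This pillar uses only substitutability.

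Second, I would establish the rural hospitals theorem: under substitutability together with LAD, every firm signs the same number of contracts, $|C_f(M_f)|$, in every HM-stable matching $M$, and consequently the set of matched workers is invariant across all HM-stable matchings. The plan is to compare the worker-optimal and firm-optimal HM-stable matchings, use substitutability to pass monotonically between them, and invoke LAD to rule out any strict change in the signed counts; equality of the total count then forces equality firm by firm.

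The heart of the argument, and the step I expect to be the main obstacle, is strategy-proofness itself. Suppose worker $w$ could profit by reporting $P'_w$ in place of her true $P_w$, obtaining an outcome $M'$ with $M'_w\succ_w M_w$ under her true preference, where $M$ is the truthful outcome. The plan is (i) to reduce an arbitrary profitable misreport to a \emph{drop strategy}, i.e.\ a report listing only a prefix of her true list (indeed only the single contract $M'_w$), arguing that reordering and appending inferior contracts cannot help beyond what dropping already achieves; and (ii) to derive a contradiction by transferring stability between the truthful and manipulated profiles. Here LAD is essential: running the cumulative-offer process, dropping a contract from $w$'s list can only weakly decrease each firm's signed count, while the rural hospitals invariance pins these counts down and substitutability controls which contracts remain available to $w$. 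Combining these should force any contract $w$ can secure under a drop strategy to be weakly worse (under $P_w$) than $M_w$, contradicting $M'_w\succ_w M_w$. Making the reduction in (i) precise, and correctly transporting the blocking and stability conditions across the two different preference profiles in (ii), is the delicate part, since HM-stability is not preserved automatically when $w$'s declared acceptability set changes; this is exactly where the joint force of substitutability and LAD must be deployed.
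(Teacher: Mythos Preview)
The paper does not give a proof of this theorem: it is quoted as an external result of Hatfield and Milgrom \cite{HM05} and used as a black box in the proof of Lemma~\ref{lem:SP}. (The surrounding footnotes also flag the irrelevance-of-rejected-contracts subtlety noted by Ayg{\"u}n and S{\"o}nmez, and observe that substitutability together with LAD already implies that condition.) So there is no in-paper argument to compare your sketch against.

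That said, your three-pillar outline---lattice/fixed-point existence of the worker-optimal HM-stable matching from substitutability, the rural hospitals invariance from substitutability plus LAD, and the reduction of an arbitrary profitable misreport to a drop strategy followed by a stability-transfer contradiction---is the standard architecture of the Hatfield--Milgrom proof and its successors. The steps you single out as delicate (making the reduction to drop strategies precise, and transporting HM-stability between the truthful and manipulated profiles when $w$'s acceptability set changes) are exactly where the real work lies; your proposal gestures at them rather than carrying them out, so as written it is a correct plan but not yet a proof.
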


We can reduce {\sc smti-olc} to the HM model if the preference lists of all agents are strict.
Let $I=(W,F,E,\{P_w\}_{w\in W}, \{\laminar_f,q_f,P_f\}_{f\in F})$ be  an {\sc smti-olc} instance without ties.
For each firm $f\in F$, let $(E_f, \I_f)$ be a laminar matroid defined by $\laminar_f$ and $q_f$ 
and let $\succ_f$ be a strict linear order on $E_f$ representing $P_f$.
From an ordered matroid $(E_f, \I_f, \succ_f)$, define $C_f:2^{E_f}\to 2^{E_f}$ as in Section~\ref{sec:matroid-kernel}.
Then, we say that an instance $I'=(W,F,E,\{P_w\}_{w\in W}, \{C_f\}_{f\in F})$ of the HM model is {\em induced from $I$}. 
The following facts are known from previous works.

\begin{proposition}\label{prop:reduction1}
For an {\sc smti-olc} instance $I$ without ties,
the choice functions in the induced instance $I'$ satisfy substitutability and the law of aggregate demand.
\end{proposition}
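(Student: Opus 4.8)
The plan is to show that the choice function $C_f$ arising from the greedy algorithm on the ordered laminar matroid $(E_f, \I_f, \succ_f)$ satisfies the two axioms. Both properties are classical facts about matroid greedy choice functions, so the argument is essentially a verification; I would isolate the two key structural observations about $C_f$ and then deduce each axiom.

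First I would record the basic description of $C_f$: for $S\subseteq E_f$, $C_f(S)$ is the greedily constructed maximal independent subset of $S$, obtained by scanning elements of $S$ in decreasing $\succ_f$-order and adding each element whenever it keeps the set independent. The two observations I need are: (a) $C_f(S)$ is a basis of $S$ in the matroid $(E_f,\I_f)$, i.e., an inclusion-maximal independent subset of $S$, so $|C_f(S)|$ equals the rank $r(S)$; and (b) an element $e\in S$ lies in $C_f(S)$ if and only if $e$ is not spanned by the set of elements of $S$ strictly preferred to $e$ — equivalently, $e\in C_f(S)$ iff $e\notin \spann(\{e'\in S : e'\succ_f e\})$. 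Observation (b) is the standard characterization of which elements survive the greedy scan, and it is the workhorse for substitutability.

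For the law of aggregate demand, I would use (a) together with monotonicity of the rank function: if $S\subseteq T\subseteq E_f$ then $r(S)\le r(T)$, hence $|C_f(S)| = r(S)\le r(T) = |C_f(T)|$. For substitutability, suppose $S\subseteq T\subseteq E_f$ and $e\in S\setminus C_f(S)$; I must show $e\in T\setminus C_f(T)$. By (b), $e\in \spann(\{e'\in S: e'\succ_f e\})$. Since $S\subseteq T$, the set $\{e'\in T: e'\succ_f e\}$ contains $\{e'\in S: e'\succ_f e\}$, so by monotonicity of span, $e\in \spann(\{e'\in T: e'\succ_f e\})$; applying (b) in the other direction gives $e\notin C_f(T)$, hence $e\in T\setminus C_f(T)$.

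The main obstacle — really the only nontrivial point — is establishing characterization (b), that the greedy scan keeps exactly the non-spanned elements, and doing so cleanly without re-deriving matroid theory from scratch. I would handle it by induction on the position of $e$ in the $\succ_f$-order: by the exchange axiom (I3), the greedy partial solution after processing all elements of $S$ strictly above $e$ is a basis of $\{e'\in S: e'\succ_f e\}$, so it spans $e$ precisely when $e$ does, and $e$ is added to the solution iff adding it preserves independence iff $e$ is not spanned by that basis iff $e\notin\spann(\{e'\in S: e'\succ_f e\})$. Since this is a textbook property of matroid greedy algorithms (and the paper already cites Oxley and Fleiner for matroid background), I would state it with a brief justification and cite Fleiner~\cite{Fleiner01, Fleiner03} or Oxley~\cite{Oxley} rather than belaboring the details, then conclude the two axioms as above.
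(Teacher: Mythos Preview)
Your proof is correct and follows essentially the same approach as the paper: the law of aggregate demand is derived from monotonicity of the matroid rank function (exactly as the paper does), and substitutability is obtained from the greedy/span characterization of $C_f$, which is precisely the content of the Fleiner result the paper cites. You have simply unpacked the argument that the paper delegates to \cite{Fleiner01,Fleiner03} and \cite{Oxley}, so there is no substantive difference in strategy.
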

\begin{proof}
It was shown by Fleiner \cite{Fleiner01, Fleiner03}  that choice functions defined from ordered matroids as in Section~\ref{sec:matroid-kernel} 
satisfy substitutability (called {\em comonotonicity} in \cite{Fleiner03}).
The law of aggregate demand easily follows from the monotonicity of matroid rank functions (see, e.g., \cite{Oxley,Yokoi19}).
\end{proof}

\begin{proposition}\label{prop:reduction2}
For an {\sc smti-olc} instance $I$ without ties,
a set $M\subseteq E$ is an HM-stable matching of the induced instance $I'$
if and only if $M$ is a stable matching of $I$.
\end{proposition}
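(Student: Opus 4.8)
The plan is to unfold both notions of stability—HM-stability in the induced HM instance $I'$ and ordinary stability in the {\sc smti-olc} instance $I$—and check that they coincide term by term, using the fact that $C_f$ is the matroid choice function $C_{\M_f}$ of the ordered matroid $\M_f=(E_f,\I_f,\succ_f)$. First I would record the elementary property of the greedy choice function: for $S\subseteq E_f$ one has $C_f(S)\in\I_f$, $C_f(S)\subseteq S$, and moreover $C_f(S)$ is the $\succ_f$-lexicographically maximal independent subset of $S$; in particular $C_f(S)=S$ holds precisely when $S\in\I_f$. This immediately matches the two feasibility notions: $M$ is feasible for $f$ in $I'$ (i.e. $C_f(M_f)=M_f$) iff $M_f\in\I_f$ iff $M$ is feasible for $f$ in $I$. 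Since the worker side is identical in the two models (each $w$ gets at most one contract, with the same strict list $P_w$), the classes of matchings coincide.

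The core of the argument is that, for a matching $M$ and a contract $e\in E_f\setminus M$, the condition ``$e$ is free for $f$ in $I'$'' (namely $e\in C_f(M_f+e)$) is equivalent to ``$e$ is free for $f$ in $I$'' (namely $M_f+e\in\I_f$, or there is $e'\in M_f$ with $e\succ_f e'$ and $M_f+e-e'\in\I_f$). For the forward direction, suppose $e\in C_f(M_f+e)$. Run the greedy algorithm on $M_f+e$: since $M_f\in\I_f$, every element of $M_f$ with higher $\succ_f$-priority than $e$ is taken, so when $e$ is considered the current set is $\{e'\in M_f: e'\succ_f e\}+e$; if this lies in $\I_f$ we are in the first case (if it augments all the way to $M_f+e\in\I_f$) or we can exhibit a lower-priority $e'\in M_f$ that the greedy later discards, giving $M_f+e-e'\in\I_f$ by the exchange property. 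Conversely, if $M_f+e\in\I_f$ then $e$ is certainly chosen by the greedy; and if instead $M_f+e-e'\in\I_f$ for some $e'\in M_f$ with $e\succ_f e'$, then when $e$ is reached in the greedy run on $M_f+e$ the accumulated set is independent and contains no obstruction of higher priority than $e$, so $e$ is added, i.e. $e\in C_f(M_f+e)$. I would phrase this exchange bookkeeping cleanly using that $(E_f,\I_f)$ is a matroid (axioms (I1)–(I3)) rather than laminarity specifically.

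Having matched ``feasible for $f$'' and ``free for $f$'' across the two models, and noting the worker-side notions ``free for $w$'' are literally the same, the definitions of blocking contract and of stable/HM-stable matching line up, so $M$ is HM-stable in $I'$ iff $M$ is stable in $I$, which is the claim. The main obstacle is the careful, case-by-case verification of the ``free for $f$'' equivalence: one must be precise about what the greedy has accumulated at the moment $e$ is examined and translate between ``greedy keeps $e$'' and the two-pronged free-for-$f$ definition, invoking the matroid exchange axiom to produce the discarded element $e'$. This is exactly the sort of correspondence between matroid choice functions and matroid-stability that appears in Fleiner's framework, and I would remark that the statement (as the excerpt itself notes) is essentially known from \cite{FK16, Yokoi17}, so the proof is included only for completeness.
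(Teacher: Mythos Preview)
Your proposal is correct and follows essentially the same route as the paper: both reduce the claim to checking that $C_f(M_f)=M_f\Leftrightarrow M_f\in\I_f$ and that $e\in C_f(M_f+e)$ is equivalent to the two-pronged ``free for $f$'' condition. The only cosmetic difference is that the paper obtains the latter equivalence by invoking the characterization of $C_f(M_f+e)$ as the maximum-weight independent subset under any weighting compatible with $\succ_f$, whereas you trace the greedy run step by step; both arguments are standard and equally short.
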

\begin{proof}
By the definition of $C_f$, we have $C_f(M_f)=M_f$ if and only if $M_f\in \I_f$.
Note that the definition of $C_f$ is identical to the matroid greedy algorithm (see, e.g., Oxley \cite{Oxley}).
Then, if there is a weight function $w:E_f\to \R_+$ such that $w(e)>w(e')\Leftrightarrow e\succ_f e'$,
the set $C_f(M_f+e)$ is the maximum weight independent subset of $M_f+e$ (see also \cite[Proposition~1]{Yokoi17}).
This fact implies that, when $M_f\in \I_f$, we have $e\in C_f(M_f+e)$ if and only if $M_f+e\in \I_f$ or 
there exists  $e'\in M_{f}$ such that $e\succ_f e'$ and $M_{f}+e-e'\in \I_f$.
Then, the statement follows.
\end{proof}

\begin{proof}[Proof of Lemma~\ref{lem:SP}]
By combining Theorem~\ref{thm:HM} and Propositions~\ref{prop:reduction1} and \ref{prop:reduction2}, we can immediately obtain Lemma~\ref{lem:SP}.
\end{proof}

\subsection{Proof of Theorem~\ref{thm:2-approx}}\label{app:2-approx}
We prove Theorem~\ref{thm:2-approx}, which states that 
there is a worker-strategy-proof mechanism returning a $2$-approximate solution for {\sc max-smti-lc}.
For this purpose, we prepare the following proposition,
which generalizes a well-known fact of {\sc max-smti} to the setting of {\sc max-smti-lc}.
It claims that we can obtain a $2$-approximate solution by breaking ties arbitrarily 
and computing a stable matching of the resultant instance \cite{MIIMM02}.
\begin{proposition}\label{prop:2-approx}
For an {\sc smti-lc} instance $I=(W,F,E,\{\laminar_a,q_a,P_a\}_{a\in W\cup F})$, 
define $I'$ by replacing each $P_a$ with any strict preference $P'_a$ that is consistent with $P_a$ 
(i.e., obtain $I'$ from $I$ by tie-breaking). 
Then, any stable matching $M$ of $I'$ is a stable matching of $I$ and satisfies $|M|\geq \frac{1}{2}|M_{\rm OPT}|$, 
where $M_{\rm OPT}$ is a maximum-cardinality stable matching of $I$.
\end{proposition}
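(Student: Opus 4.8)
The plan is to mimic the classical argument for {\sc max-smti} (as in \cite{MIIMM02}) but phrased in the matroid language already set up in the excerpt, so that laminar constraints on both sides are handled uniformly. First I would record the easy half: if $M$ is stable in the tie-broken instance $I'$, then $M$ is stable in $I$. This is because feasibility is identical in $I$ and $I'$ (the laminar families and quotas are untouched), and "free for $a$" only gets \emph{harder} to satisfy as preferences become strict: if a contract $e$ were free for both endpoints in $I$, then since $P'_a$ refines $P_a$, the witness $e'\in M_a$ with $e\succ_a e'$ still satisfies $e\succ'_a e'$ (an indifference can only break in one direction, and the feasibility condition $M_a+e-e'$ is unchanged), so $e$ would block $M$ in $I'$ as well. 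Hence no blocking contract of $I$ survives, contradicting stability in $I'$.

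For the approximation bound, set $N\coloneqq M_{\rm OPT}$ and work with the two base-orderable matroids $(E,\I_W)$ and $(E,\I_F)$ (direct sums of the per-agent laminar matroids), exactly as in the proof of Lemma~\ref{lem:approx}. Both $M$ and $N$ lie in $\I_W\cap\I_F$. The key observation is: for every $e_i\in N\setminus M$ we have $M+e_i\notin\I_W\cap\I_F$, since otherwise $e_i$ would be free for both its endpoints and hence block $M$ in $I'$ (here I use that $N$ is a matching, so $M_w+e_i$ or $M_w+e_i-e_j$ with $e_j\in M_w$ lies in $\I_w$, and similarly on the firm side — the argument is the "free for both sides" unpacking already done twice in the excerpt). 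Now suppose for contradiction that $|M|<\tfrac12|N|$. I would run the same kind of double-counting as in Lemma~\ref{lem:triple}, but one level shorter: pick $A_1\subseteq N\setminus M$ with $|A_1|=|N|-|M|$ and $M\cup A_1\in\I_W$, and $A_2\subseteq N\setminus M$ with $|A_2|=|N|-|M|$ and $M\cup A_2\in\I_F$; these are disjoint because $e\in A_1\cap A_2$ would give $M+e\in\I_W\cap\I_F$. Then $|A_1\cup A_2|=2(|N|-|M|)>|N|$ would force $A_1\cup A_2\not\subseteq N\setminus M$, which is absurd — giving the contradiction and hence $|M|\ge\tfrac12|N|$.

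Actually the cleanest route avoids even building $A_2$ explicitly: from $M+e_i\notin\I_W\cap\I_F$ for all $e_i\in N\setminus M$, partition $N\setminus M$ into $B_W\coloneqq\set{e_i\in N\setminus M| M+e_i\notin\I_W}$ and its complement $B_F\subseteq\set{e_i| M+e_i\notin\I_F}$. Since $M,N\in\I_W$ and $N\setminus B_W = (N\setminus M)\setminus B_W$ consists of elements that can each be added to $M$ inside $\I_W$, a standard matroid exchange argument (augment $M$ greedily inside $\I_W$ using only elements of $N$) shows $|N\setminus M|-|B_W|\le$ (rank-gap on the $W$ side) $\le |M\setminus N|$, and symmetrically $|N\setminus M|-|B_F|\le|M\setminus N|$; adding these and using $|B_W|+|B_F|\ge|N\setminus M|$ yields $|N\setminus M|\le 2|M\setminus N|$, i.e. $|N|\le 2|M|$. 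The main obstacle is getting this matroid exchange inequality stated correctly — that for $S,T$ independent in a matroid, the number of elements of $T\setminus S$ that cannot be added to $S$ is at most $|S\setminus T|$ — but this is immediate from axiom (I3) applied repeatedly (equivalently, $|S|\ge|T| - |S\setminus T|$ is false in general, so one must phrase it as: greedily extend $S$ within $T\cup S$ to a basis of $\spann(S\cup T)$; the extension has size $\ge |T|-|S|$ ... ). I would present whichever of these two framings reads more smoothly; both are short, and base-orderability is not even needed here — only the matroid axioms — since we only need a length-two augmenting structure rather than the length-three one of Lemma~\ref{lem:triple}.
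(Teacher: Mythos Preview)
Your first framing (with $A_1$ and $A_2$) is correct and is exactly the paper's proof: use axiom~(I3) to find $A_1,A_2\subseteq N\setminus M$ of size $|N|-|M|$ with $M\cup A_1\in\I_W$ and $M\cup A_2\in\I_F$; since $|A_1|+|A_2|=2(|N|-|M|)>|N|\ge|N\setminus M|$, the sets meet, and any $e\in A_1\cap A_2$ satisfies $M+e\in\I_W\cap\I_F$, so $e$ blocks $M$ in $I'$. Base-orderability is indeed not needed here, and the stability half is likewise identical to the paper's.

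Your ``cleanest route'' alternative, however, is miswritten. The displayed inequality $|N\setminus M|-|B_W|\le|M\setminus N|$ is false in general: in a free matroid $B_W=\emptyset$, and you are precisely in the regime $|N\setminus M|>|M\setminus N|$. The correct statement---which you do eventually write---is $|B_W|\le|M\setminus N|$ (since $B_W\subseteq N\cap\mathrm{cl}_{\I_W}(M)$, an independent subset of a set of rank $|M|$, and it is disjoint from $M\cap N$). With that and the symmetric bound $|B_F|\le|M\setminus N|$, the conclusion $|N\setminus M|=|B_W|+|B_F|\le 2|M\setminus N|$ is immediate, without the ``adding and using $|B_W|+|B_F|\ge|N\setminus M|$'' step. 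So the alternative can be salvaged, but as written the inequalities point the wrong way; stick with the first framing.
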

\begin{proof}
First, we show that $M$ is a stable matching of $I$.
As $M$ is a matching in $I'$, it is clearly a matching in $I$.
Suppose, to the contrary, that some contract $e\in E\setminus M$ blocks $M$ in $I$.
Then, $e$ is free for $w\coloneqq \w(e)$,
which implies that $M_w+e$ is feasible for $w$ or 
there exists $e'\in M_w$ such that $e\succ_w e'$ and $M_w+e-e'$ is feasible for $w$,
where $\succ_w$ is defined by $P_w$.
As $P'_w$ is consistent with $P_w$, it implies that $e$ is free for $w$ also in $I'$.
We can similarly show that $e$ is free for $f\coloneqq \f(e)$ in $I'$. Then, $e$ blocks $M$ in $I'$, 
which contradicts $M$ being a stable matching of $I'$.

Next, we show that $|M|\geq \frac{1}{2}|N|$, where $N\coloneqq M_{\rm OPT}$.
Let $(E,\I_W)$ and $(E,\I_F)$ be defined as in Section~\ref{sec:correctness} (after Lemma~\ref{lem:stable}).
Then, we have $M,N\in \I_W\cap \I_F$.
Suppose, to the contrary,  that $|M|<\frac{1}{2}|N|$.
As $(E,\I_W)$ is a matroid, the matroid axiom (I3) implies that 
there exists a subset $A_1\subseteq N\setminus M$ such that
$|A_1|=|N|-|M|$ and $M\cup A_1\in \I_W$.
Similarly, as $(E,\I_F)$ is a matroid, there exists a subset
$A_2\subseteq N\setminus M$ such that
$|A_2|=|N|-|M|$ and $M\cup A_2\in \I_F$.
By $|M|<\frac{1}{2}|N|$, we have $|A_1|+|A_2|=2(|N|-|M|)>|N|\geq |N\setminus M|$.
Since $A_1,A_2\subseteq  N\setminus M$, this implies $A_1\cap A_2\neq \emptyset$.
Then, there exists $e\in A_1\cap A_2$, which satisfies $M+e\in \I_W$ and
$M+e\in \I_F$. Then, $e$ is free for both the worker $\w(e)$ and the firm $\f(e)$ in $I'$.
This contradicts the stability of $M$ in $I'$.
\end{proof}

We now complete the proof of Theorem~\ref{thm:2-approx}
by combining Lemma~\ref{lem:SP} and Proposition~\ref{prop:2-approx}.

\begin{proof}[Proof of Theorem~\ref{thm:2-approx}]
Define a mechanism $A$ as follows.
Given an instance $I$ of {\sc smti-olc}, break ties such that, among indifferent contracts, 
contracts with smaller indices have higher priorities. 
Let $I'$ be the resultant instance, and let $A(I)$ be the worker-optimal stable matching of $I'$. 

By Proposition~\ref{prop:2-approx}, the matching $M\coloneqq A(I)$ satisfies $|M|\geq \frac{1}{2}|M_{\rm OPT}|$.
Furthermore, the time complexity follows from Theorem~\ref{thm:Fleiner} and Lemma~\ref{lem:oracle}.
We complete the proof by showing the worker-strategy-proofness of $A$.

Let $J$ be an instance obtained from $I$ by replacing the preference list of some worker $w$ with some other list.
Then, $N\coloneqq A(J)$ is the worker-optimal stable matching of $J'$,
where $J'$ is obtained from $J$ by breaking ties according to the above-mentioned tie-breaking rule.
Then, $I'$ and $J'$ differ only in the preference lists of $w$.
Let $P_w$ and $P'_w$ be the preference lists of $w$ in $I$ and $I'$, respectively.
By Lemma~\ref{lem:SP}, $w$ weakly prefers $M$ to $N$ with respect to $P'_w$.
As $P'_w$ is consistent with $P_w$, 
we can see that $w$ weakly prefers $M$ to $N$ also with respect to $P_w$.
Thus, the mechanism $A$ is worker-strategy-proof.
\end{proof}

\end{document}